\newcounter{hours}
\newcounter{minutes}
\newcommand{\Printtime}{\setcounter{hours}{\time/60}%
\setcounter{minutes}{\time-\value{hours}*60}%
\thehours:%
\ifthenelse{\value{minutes}<10}{0}{}\theminutes}
\newtheorem{xdefinition}{Definition}[section]
\newtheorem{xobservation}[xdefinition]{Observation}
\newtheorem{xremark}[xdefinition]{Remark}
\newtheorem{xtheorem}[xdefinition]{Theorem}
\newtheorem{xlemma}[xdefinition]{Lemma}
\newtheorem{xproposition}[xdefinition]{Proposition}
\newtheorem{xcorollary}[xdefinition]{Corollary}
\newenvironment{definition}{\begin{xdefinition}\rm}%
{\hspace*{\fill}\raisebox{-1pt}{\boldmath$\Box$}\end{xdefinition}}
{\hspace*{\fill}\raisebox{-1pt}{\boldmath$\Box$}\end{xobservation}}
\newenvironment{remark}{\begin{xremark}\rm}%
{\hspace*{\fill}\raisebox{-1pt}{\boldmath$\Box$}\end{xremark}}
\newenvironment{theorem}{\begin{xtheorem}\rm}{\end{xtheorem}}
\newenvironment{corollary}{\begin{xcorollary}\rm}{\end{xcorollary}}
\newenvironment{proof}{\begin{trivlist}\item[]{\bf Proof }}%
{\hspace*{\fill}\raisebox{-1pt}{\boldmath$\Box$}\end{trivlist}}
\DeclareMathOperator{\OPT}{OPT}
\DeclareMathOperator{\ALG}{ALG}
\DeclareMathOperator{\BAD}{BAD}
\renewcommand{\epsilon}{\varepsilon}
\newcommand{\SIZE}[1]{\left|#1\right|}
\newcommand{\SEQ}[1]{\left\langle#1\right\rangle}
\newcommand{\SET}[1]{\left\{#1\right\}}
\newcommand{\EPSINT}{\ensuremath{(0,\frac12]}}
\newcommand{\init}{\ensuremath{\operatorname{\textit{init}}}\xspace}
\newcommand{\isempty}{\ensuremath{\operatorname{\textit{isempty}}}\xspace}
\newcommand{\findmax}{\ensuremath{\operatorname{\textit{findmax}}}\xspace}
\newcommand{\deletemax}{\ensuremath{\operatorname{\textit{deletemax}}}\xspace}
\newcommand{\pairs}{\ensuremath{\operatorname{\textit{pairs}}}\xspace}
\title{Advice Complexity of Priority Algorithms}
\author{Allan Borodin\thanks{Research is supported by NSERC.} \\ University of Toronto \\ \textsf{bor@cs.toronto.edu} \and 
Joan Boyar\thanks{Supported in part by the Independent Research Fund Denmark, Natural Sciences, grant DFF-7014-00041.} \\ University of Southern Denmark \\ \textsf{joan@imada.sdu.dk} \and 
Kim S. Larsen\footnotemark[3] \\ University of Southern Denmark \\ \textsf{kslarsen@imada.sdu.dk} \and 
Denis Pankratov\footnotemark[2] \\ Concordia University \\ \textsf{denis.pankratov@concordia.ca}}
\date{}
\begin{document}

\maketitle

\begin{abstract}
  The priority model of ``greedy-like'' algorithms was introduced by
  Borodin, Nielsen, and Rackoff in 2002. We augment this model by
  allowing priority algorithms to have access to advice, i.e., side
  information precomputed by an all-powerful oracle. Obtaining lower
  bounds in the priority model without advice can be challenging and
  may involve intricate adversary arguments.
  Since the priority model with advice is even more
  powerful, obtaining lower bounds presents additional
  difficulties. We sidestep these difficulties by developing a general
  framework of reductions which makes lower bound proofs relatively
  straightforward and routine. We start by introducing the Pair
  Matching problem, for which we are able to prove strong lower bounds
  in the priority model with advice.
  We develop a template for constructing a reduction from Pair Matching
  to other problems in the priority model with advice -- this part is
  technically challenging since the reduction needs to define a valid
  priority function for Pair Matching while respecting the priority function
  for the other problem.
  Finally, we apply the template to obtain lower bounds for a
  number of standard discrete optimization problems.
\end{abstract}


\section{Introduction}
\label{sec:intro}
Greedy algorithms are among the first class of algorithms studied in an undergraduate computer science curriculum. They are among the simplest and fastest algorithms for a given optimization problem, often achieving a reasonably good approximation ratio, even when the problem is NP-hard. In spite of their importance, the notion of a greedy algorithm is not well defined. This might be satisfactory for studying upper bounds; when an algorithm is suggested, it does not matter much whether everyone agrees that it is greedy or not. However, lower bounds (inapproximation results) require a precise definition.
Perhaps giving a precise definition for all greedy algorithms is not possible, 
since one can provide examples 
that seem to be outside the scope of the given model.  

Setting this philosophical question aside, we follow the model of greedy-like algorithms due to Borodin, Nielsen, and Rackoff~\cite{BorodinNR2003}. The {\it fixed priority model} captures the observation that many greedy algorithms work by first sorting the input items according to some priority function, and then, during  a single pass over the sorted input, making online irrevocable decisions for each input item. This model is similar to the online algorithm model with an additional preprocessing step of sorting inputs. Of course, if any sorting function is allowed, this would trivialize the model for most applications. Instead, a total ordering on the universe of all possible input items is specified before any
input is seen, and the sorting is done according to this ordering,
after which the algorithm proceeds as an online algorithm.
This model has been adopted with respect to a broad array of topics~\cite{Regev02, AngelopoulosB2004, LeshM06, DavisI2009, Poloczek11, BorodinBLM2010, BorodinL16, BesserP17}.
In spite of the appeal of the model, there are relatively few lower bounds in this model. There does not seem to be a general method for proving lower bounds; that is, the adversary arguments tend to be ad-hoc. In addition, the basic priority model does not capture the notion of side information. The assumption that an algorithm does not know anything about the input is quite pessimistic in practice.
This issue has been addressed recently in the area of online algorithms
by considering models with advice (see~\cite{BoyarFKLM2016} for an overview). In these
models, side information, such as the number of input items or a
maximum weight of an item, is computed by an all powerful oracle and
is available to an algorithm before seeing any of the input. This
information is then used to make better online decisions. The goal is
to study trade-offs between advice length and the competitive ratio.

We introduce a general technique for
establishing lower bounds on priority algorithms with advice. These
algorithms are a simultaneous generalization of priority algorithms
and online algorithms with advice. Our technique is inspired by the
recent success of the binary string guessing problem and reductions in
the area of online algorithms with advice. We identify a difficult problem (Pair
Matching) that can be thought of as a sorting-resistant version of the
binary string guessing problem. Then, we describe the template of
gadget reductions from Pair Matching to other problems in the world of
priority algorithms with advice. This part turns out to be 
challenging, mostly because one has to ensure that priorities are
respected by the reduction. We then apply the template to a number of
classic optimization problems.
We restrict our attention to the fixed priority model.
We also note that we consider deterministic algorithms unless otherwise
specified.

{\bf Related model.}
Fixed priority algorithms with advice can be viewed in terms of the fixed priority backtracking model of 
Alekhnovich et al~\cite{alekhnovich2005toward}. 
That model starts
by ordering the inputs using a fixed priority function and then executes 
a computation tree where different decisions can be tried for the same
input item by branching in the tree, and then choosing the best result.
The lower bound results 
generally consider
how much width (maximum number of nodes for any fixed depth in the
tree) is necessary to obtain optimality 
where the
width proven is often of the form $2^{\Omega(n)}$.
In contrast, our results give a parameterized trade-off between the number
of advice bits and the competitive ratio.
However, given an algorithm
in the fixed priority backtracking model, the logarithm
of the width gives an upper bound on the number of bits
of advice needed for the same approximation ratio. Similarly,
a lower bound on the advice complexity gives a lower bound on
width.

{\bf Organization.}
We give a formal description of the models in Section~\ref{sec:prelim}. We
motivate the study of the priority model with advice in
Section~\ref{sec:motivation}. We
introduce and analyze the Pair Matching problem in Section~\ref{sec:pair_matching}. We describe the reduction framework for
obtaining lower bounds in Section~\ref{sec:redux_temp} and apply it to classic problems in Section~\ref{sec:classical_redux}. We conclude in
Section~\ref{sec:conclusion}. 


\section{Preliminaries} 
\label{sec:prelim}
We consider optimization problems for which 
we are given an objective function to minimize or maximize,
and measure our success relative to an optimal offline algorithm.

{\bf Online Algorithms with Advice.}
In an online setting, the input is revealed one item at a time by an adversary.
An algorithm makes an irrevocable
decision about the current item before the next item is revealed.
For more background on online algorithms, we refer the
reader to the texts by Borodin and El-Yaniv~\cite{BE98} and
Komm~\cite{Komm16}.

The assumption that an online algorithm does not know anything about
the input is quite often too pessimistic in practice. Depending on
the application domain, the algorithm designer may have access to
knowledge about the number of input items, the largest weight of an
input item, some partial solution based on historical
data, etc. The advice tape model for online algorithms captures the notion
of side information in a purely information-theoretic way as
follows.
An all-powerful oracle that sees the entire input prepares
the infinite advice tape with bits, which are
available to the algorithm during the entire process.
The oracle and the algorithm work in a cooperative mode
-- the oracle knows how the algorithm will use the bits
and is trying to maximize the usefulness of the
advice with regards to optimizing the given objective function.
The advice complexity of an algorithm is a function of the input length
and is the number of bits read by the algorithm in the worst case for inputs
of a given size.
For more background on online algorithms with advice, see the survey by
Boyar et al.~\cite{BoyarFKLM2016}.

{\bf Fixed Priority Model with Advice.}
Fixed priority algorithms can be formulated as
follows. Let $\mathcal{U}$ be a universe of all possible input items. An input to
the problem consists of a finite set of items
$\mathcal{I} \subset \mathcal{U}$ satisfying some consistency
conditions. The algorithm specifies a total order on $\mathcal{U}$
before seeing the input. Then, the input items are revealed
according to the total order specified by the
algorithm. The algorithm makes irrevocable decisions about the items
as they arrive.\footnote{In the adaptive priority model, the algorithm
  is allowed to specify a new ordering depending on previous items and
  decisions before a new input item is presented.}
The overall set of decisions is then evaluated according to some objective
function. The performance of the algorithm is measured by the
asymptotic approximation ratio with respect to the value provided  
by an optimal offline algorithm. The
notion of advice is added to the model as follows.  After the
algorithm has chosen a total order on $\mathcal{U}$, an all-powerful
oracle that has access to the entire input $\mathcal{I}$ creates a
tape of infinitely many bits.  The algorithm knows how the advice bits
are created and has access to them during the online decision phase.
Our interest is in how many bits of advice the algorithm uses compared
with the result it obtains.

We consider only countable universes $\mathcal{U}$. In this case,
having a total order on elements in $\mathcal{U}$ is equivalent (via a
simple inductive argument) to having a priority function
$P: \mathcal{U} \rightarrow \mathbb{R}$. The assumption of the
universe being countable is natural, but also necessary for the above
equivalence: there are uncountably many totally ordered sets that do
not embed into the reals with the standard order.

\begin{definition}
  Let $\mathcal{U}$ be the universe of input items and let
  $P: \mathcal{U} \rightarrow \mathbb{R}$ be a priority function.
  For $u_1, u_2 \in \mathcal{U}$, we write $u_1 <_P u_2$ to mean
  $P(u_1) < P(u_2)$. 
We will say that larger priority 
means that the item appears earlier in the input, i.e.,
  $u_1 <_P u_2$ means that $u_2$ appears \emph{before} $u_1$ when the
  input is given according to $P$.
\end{definition}

\emph{Example.} Kruskal's optimal algorithm for the minimum spanning tree
problem is a fixed priority algorithm without advice. The universe of
items is
$\mathcal{U} = \mathbb{N} \times \mathbb{N} \times \mathbb{Q}$. An
item $(i,j,w) \in \mathcal{U}$ represents an edge between a vertex $i$
and a vertex $j$ of weight $w$. The consistency condition on the input
is that the edge $\{i,j\}$ can be present at most once in the
input. The total order on the universe is specified by all items of
smaller weight having higher priority than all items of larger
weight, breaking ties, say, by lexicographic order on the names of
vertices. Kruskal's algorithm processes input items 
in the given order and greedily accepts those items 
that do not result in
cycles. 

In this paper, we shall only consider
the following input model for graph problems in the priority
setting:

\emph{Vertex arrival, vertex adjacency}: an input item consists of a name of a vertex together with a set of names of adjacent vertices. There is a consistency condition on the entire input: if $u$ appears as a neighbor of $v$, then $v$ must appear as a neighbor of $u$.

{\bf Binary String Guessing Problem.}
Later we introduce the Pair Matching problem that can be viewed as a priority model analogue of the following online  binary string guessing problem.

\begin{definition}
The Binary String Guessing Problem \cite{BHKKSS14} with known history (2-SGKH) is the following online problem. The input consists of $(n, \sigma=(x_1, \ldots, x_n))$, where $x_i \in \{0,1\}$. Upon seeing $x_1, \ldots, x_{i-1}$ an algorithm guesses the value of $x_i$. The actual value of $x_i$ is revealed after the guess. The goal is to maximize the number of correct guesses.
\end{definition}

B\"{o}ckenhauer et al.~\cite{BHKKSS14} provide a trade-off between the number of advice bits and the approximation ratio for the binary string guessing problem. 


\begin{theorem}[B\"{o}ckenhauer et al.~\cite{BHKKSS14}]
\label{thm:2sgkh-lb}
For the $2$-SGKH problem and any $\epsilon\in\EPSINT$,
no online algorithm reading fewer than $(1-H(\epsilon))n$ advice bits can
make fewer than $\epsilon n$ mistakes for large enough~$n$, where $H(p) = H(1-p) = - p \log (p) - (1-p) \log (1-p)$ is the binary entropy function.
\end{theorem}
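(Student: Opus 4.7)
The plan is to use a standard information-theoretic encoding argument: show that a good algorithm yields a short description of an arbitrary binary string, which must be impossible by a counting bound.

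First I would fix a deterministic algorithm $A$ that reads at most $b$ advice bits and makes at most $\epsilon n$ mistakes on every input of length $n$. I would argue that from the pair $(\alpha, M)$, where $\alpha \in \{0,1\}^b$ is the advice string the oracle produces and $M \subseteq \{1,\ldots,n\}$ is the set of positions at which $A$ errs, one can reconstruct the input $\sigma$ uniquely. The reason is that $A$ is deterministic: simulating $A$ with advice $\alpha$ produces, at step $i$, a guess $g_i$ that depends only on $\alpha$ and on $x_1, \ldots, x_{i-1}$, and by induction we recover $x_i = g_i$ if $i \notin M$ and $x_i = 1 - g_i$ if $i \in M$. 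Thus the map $\sigma \mapsto (\alpha(\sigma), M(\sigma))$ is injective on $\{0,1\}^n$.

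Next I would count. There are at most $2^b$ choices for $\alpha$ and at most $\sum_{k=0}^{\lfloor \epsilon n\rfloor} \binom{n}{k}$ choices for $M$. By injectivity,
\[
2^n \;\le\; 2^b \cdot \sum_{k=0}^{\lfloor \epsilon n \rfloor} \binom{n}{k}.
\]
Applying the standard entropy bound $\sum_{k=0}^{\lfloor \epsilon n \rfloor} \binom{n}{k} \le 2^{H(\epsilon) n}$, valid for $\epsilon \in (0, 1/2]$, yields $b \ge (1 - H(\epsilon))n$, which contradicts the assumption that the algorithm reads fewer than $(1 - H(\epsilon)) n$ advice bits, for all sufficiently large $n$.

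I do not expect a real obstacle here; the one subtle point, worth stating carefully, is that the reconstruction works because $A$'s guess at step $i$ is a function of previously revealed correct values and the fixed advice tape, so the simulation proceeds inductively even though we do not know $\sigma$ a priori. The entropy bound on partial binomial sums is standard, so I would cite it rather than reprove it.
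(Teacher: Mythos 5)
This theorem is not proved in the paper at all---it is imported as a black box from B\"{o}ckenhauer et al.~\cite{BHKKSS14}---so there is no internal proof to compare against; judged on its own, your encoding argument is correct and is essentially the standard proof of this bound (in the cited reference it is phrased by partitioning inputs according to the advice string, so that each of the at most $2^b$ classes is served by one deterministic advice-free algorithm, and then bounding the number of strings on which a fixed deterministic algorithm errs at most $\epsilon n$ times by the volume of a Hamming ball; your injection $\sigma \mapsto (\alpha(\sigma),M(\sigma))$ is exactly that count). Two points worth stating precisely: since the advice tape is infinite and read adaptively, define $\alpha$ to be the first $b$ bits of the tape where $b$ is the worst-case number of bits read, and note the run of $A$ on a length-$n$ input depends only on that prefix, so there are at most $2^b$ effective advice values; and since ``fewer than $\epsilon n$ mistakes'' gives $|M|\le\lfloor \epsilon n\rfloor$ with $\epsilon\le\frac12$, the bound $\sum_{k=0}^{\lfloor\epsilon n\rfloor}\binom{n}{k}\le 2^{H(\epsilon)n}$ applies for every $n$, so your argument actually yields the conclusion without the ``large enough $n$'' qualifier (and at $\epsilon=\frac12$ the statement is vacuous, as the paper's footnote observes).
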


{\bf Competitive and Approximation Ratios.} The performance of
online algorithms is measured by their
competitive ratios.
For a minimization problem,
an online algorithm $\ALG$ is said to be
\emph{$c$-competitive}
if there exists a constant $\alpha$ such that for all input sequences
$I$ we have $\ALG(I) \le c \OPT(I) + \alpha$, where $\ALG(I)$
denotes the cost of the algorithm on $I$ and
$\OPT(I)$ is the value achieved by an offline optimal algorithm.
The infimum of all $c$ such that $\ALG$ is $c$-competitive is $\ALG$'s
\emph{competitive ratio}.
For a maximization problem, $\ALG(I)$ is referred to as profit,
and we require that $\OPT(I) \le c \ALG(I) + \alpha$.
In this way, we always have $c \ge 1$
and the closer $c$ is to $1$, the better.
Priority algorithms are thought of as approximation algorithms
and the term (asymptotic) approximation ratio is used
(but the definition is the same).


\section{Motivation}
\label{sec:motivation}

In this section we present a motivating example for studying the priority model with advice. We present a problem that is difficult in the pure priority setting or in the online setting with advice, but easy in the priority model with advice. Furthermore, the advice is easily computed by an offline algorithm. 

The problem of interest is called Greater Than Mean (GTM). In the GTM problem, the input is a sequence $x_1, \ldots, x_n$ of rational numbers. Let $m = \sum_i x_i /n$ denote the sample mean of the sequence. The goal of an algorithm is to decide for each $x_i$ whether $x_i$ is greater than the mean or not, answering $1$ or $0$, respectively. We can also assume that the length of the sequence $n$ is known to the algorithm in advance. We start by noting that there is a trivial optimal priority algorithm with little advice for this problem.

\begin{theorem}
For Greater Than Mean, there exists a fixed priority algorithm reading at most $\lceil \log n \rceil$ advice bits, solving the problem optimally.
\end{theorem}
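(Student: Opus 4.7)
The plan is to sort input items by value in decreasing order and let the oracle supply only the count of items lying strictly above the mean.

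More concretely, I would take the universe of items to be $\mathcal{U} := \mathbb{N} \times \mathbb{Q}$, where an item $(i, v)$ indicates that the $i$-th position of the input sequence carries value $v$ (this matches the GTM problem, where the sequence is indexed). The priority function $P(i, v)$ orders items so that larger $v$ appears earlier, with ties broken by, say, smaller index $i$ first; this is a well-defined total order on the countable set $\mathcal{U}$. Under this ordering, the stream presents the items in non-increasing order of their values.

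Let $k := |\{i : x_i > m\}|$, where $m$ is the mean. Since $m$ is an average of $n$ rationals, at least one $x_i$ must satisfy $x_i \leq m$, so $k \in \{0, 1, \ldots, n-1\}$. Hence $k$ can be encoded in exactly $\lceil \log n \rceil$ bits, and because $n$ is known to the algorithm in advance, it can read precisely this many bits off the advice tape without any self-delimiting overhead. The algorithm then outputs $1$ on each of the first $k$ items it processes and $0$ on the remaining $n-k$ items.

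Correctness is immediate from the choice of priority: because items arrive in non-increasing order of value, the first $k$ items are exactly those with the $k$ largest values, which by the definition of $k$ are precisely the items satisfying $x_i > m$ (items with equal values lie in a contiguous block, so the tie-breaking rule cannot separate an item above the mean from one at or below it). I do not foresee any real obstacle; the entire argument rests on the observation that pre-sorting by value reduces the problem to transmitting a single threshold index, something the online model without pre-sorting cannot exploit.
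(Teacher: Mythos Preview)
Your proposal is correct and matches the paper's approach essentially exactly: sort in non-increasing order of value and use $\lceil \log n \rceil$ advice bits to communicate the threshold position. The only cosmetic difference is that the paper encodes the first index $i$ with $x_i \le m$ (i.e., $k+1$ in your notation) rather than $k$ itself, which is an equivalent $\lceil \log n \rceil$-bit quantity.
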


\begin{proof}
The priority order is such that $x_1 \geq x_2 \ldots \geq x_n$. Thus, the integers arrive in the order from largest to smallest. The advice specifies the earliest index $i \in [n]$ such that $x_i \leq  m$.
\end{proof}

Next, we show that a priority algorithm without advice has to make many errors\footnote{In Theorem~\ref{thm:GTM_lb} and  in all of our lower bound advice results, we state the result so as to include $\epsilon = \frac{1}{2}$, in which case the conditions ``fewer than $(1/2-\epsilon)$''  and ``fewer than $(1-H(\epsilon))$'' make the statements vacuously true.}.

\begin{theorem}
\label{thm:GTM_lb}
For Greater Than Mean and any $\epsilon\in\EPSINT$,
no fixed priority algorithm without advice can make fewer than $(1/2 - \epsilon)n$ mistakes
for large enough $n$.
\end{theorem}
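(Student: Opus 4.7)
The plan is to fix an arbitrary deterministic priority algorithm $\ALG$ with priority function $P\colon\mathbb{Q}\to\mathbb{R}$, and construct two adversarial inputs $I_1,I_2$ of size $n$ on which $\ALG$'s decisions agree during its first $s := (1-2\epsilon)n$ steps but on which the correct answers for those $s$ items are flipped. A pigeonhole argument then forces $\ALG$ to make at least $s/2 = (1/2-\epsilon)n$ errors on one of the two inputs.

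Concretely, I will pick a ``common'' set $C=\{c_1,\dots,c_s\}\subset\mathbb{Q}$ of size $s$ lying in a value interval $[a,b]$, together with two disjoint ``stuffing'' sets $D_1,D_2\subset\mathbb{Q}$ with $|D_1|=|D_2|=n-s=2\epsilon n$, and put $I_1:=C\cup D_1$, $I_2:=C\cup D_2$. I require: (a) every $c\in C$ has strictly higher $P$-priority than every item in $D_1\cup D_2$; (b) the values in $D_1$ are small enough that $\mathrm{mean}(I_1)<a$; and (c) the values in $D_2$ are large enough that $\mathrm{mean}(I_2)>b$. Property (a) forces $\ALG$ to process $C$ first in both runs and in the same $P$-induced order, so its decisions on $c_1,\dots,c_s$ agree across $I_1$ and $I_2$. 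Properties (b) and (c) make the correct answer on every $c\in C$ equal to $1$ in $I_1$ and $0$ in $I_2$. Hence on each $c\in C$, $\ALG$ is wrong on at least one of the two inputs, giving $\mathrm{err}(I_1)+\mathrm{err}(I_2)\geq s$, and therefore $\max\{\mathrm{err}(I_1),\mathrm{err}(I_2)\}\geq s/2 = (1/2-\epsilon)n$, as required.

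The main technical work lies in producing $C$, $D_1$, and $D_2$ satisfying (a)--(c) for a completely arbitrary $P$. The interval $[a,b]$ cannot be fixed up front because $P$ may concentrate its highest priorities on any value region of its choosing; instead $[a,b]$ must be chosen adaptively so that it captures $s$ items of sufficiently high $P$-priority, and then the stuffing values must scale to roughly $\pm(b-a)/(2\epsilon)$ in magnitude so that $2\epsilon n$ items of $D_1$ (resp.\ $D_2$) can push the mean below $a$ (resp.\ above $b$). Because $\mathbb{Q}$ is countably infinite and dense, for any $P$ the ``low-priority tail'' contains rationals of arbitrarily large positive and arbitrarily large negative magnitude outside $[a,b]$, and after excluding the $s$ items in $C$ there remain infinitely many candidates at each extreme that can serve as $D_1$ or $D_2$. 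I would complete the argument with a short case analysis on the shape of $P$ --- in particular, on whether the top priorities of $P$ sit inside a bounded value window (in which case one places $[a,b]$ inside that window and draws stuffing from outside it) or are spread across $\mathbb{R}$ (in which case one enlarges $[a,b]$ to contain $s$ high-priority items and correspondingly takes more extreme stuffing values). Verifying that one of these cases always applies, with enough room for $2\epsilon n$ stuffing items on each side, is the step that requires the most care.
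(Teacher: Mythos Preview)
Your two-instance pigeonhole strategy is natural, but the step you flag as ``requires the most care'' actually fails for a simple and important class of priority functions, and your sketched case analysis does not rescue it.

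Take $P(q)=q$, so items are presented in decreasing value. For \emph{any} choice of $C\subset[a,b]$, the set of rationals with strictly lower $P$-priority than every element of $C$ is exactly $\{q\in\mathbb{Q}:q<\min C\}$, which is bounded above by $b$. Hence there is \emph{no} rational with value $>b$ and lower priority than all of $C$, so your set $D_2$ cannot exist. Symmetrically, for $P(q)=-q$ the set $D_1$ cannot exist. Your claim that ``the low-priority tail contains rationals of arbitrarily large positive and arbitrarily large negative magnitude'' is therefore false in general. Neither branch of your case split helps: for $P(q)=q$ there is no bounded ``top-priority window'' (priorities are unbounded above), and in your second branch, enlarging $[a,b]$ does nothing, since any item with value exceeding $b$ still has higher priority than everything in $C$.

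The paper's proof avoids this obstruction by abandoning the requirement that the mean be pushed strictly above \emph{and} strictly below the $C$-values. Instead it uses a block of $n-1$ (or $n-2$) items sharing a common value $v$, and a single final lower-priority item that makes the overall mean either exactly $v$ or strictly less than $v$; this flips the correct answer on the whole block while only ever needing stuffing on \emph{one} side of $v$. That one-sided boundary trick is precisely what lets the argument go through for monotone $P$, and it is the missing idea in your proposal.
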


\begin{proof}
Let $A$ be a fixed priority algorithm without advice for the GTM problem. Let $P$ be the corresponding priority function.  For simplicity, we assume that repeated items must occur consecutively when ordered according to $P$. We show how to get rid of the consecutive repeated items assumption in the remark immediately following this proof. Consider integers in the interval $[0,2]$. One of the following two cases must occur:

Case 1: there exists $i, j \in [0, 2]$ such that $i < j$ and $j >_P i$. Consider the behavior of the algorithm on the input where $j$ is presented $n-1$ times first. If the algorithm answers $1$ on the majority of these $n-1$ requests, then the last element is set to $j$, ensuring that all the $1$ answers were incorrect. If the algorithm answers $0$ on the majority, then the last element is set to $i$, ensuring that all the $0$ answers were incorrect. In either case, the algorithm makes at least $(n-1)/2$ mistakes.

Case 2: the priority function on the interval $[0, 2]$ is $0 >_P 1 >_P 2$.
Consider the behavior of the algorithm on the input where the first item is $0$ and the following
$n - 2$ items are set to $1$. If an algorithm answers $1$ on the majority of the $n-2$ items, 
then the last item is $2$. Thus, the mean is $1$,
ensuring that all the $1$ answers on the items with value $1$ are incorrect. If an algorithm answers $0$ on 
the majority of the $n-2$ items, then the last item is $1$.
Thus, the mean is strictly smaller than $1$, ensuring that all 
the $0$ answers of the algorithm on the $1$ items are incorrect. 
In either case, the 
algorithm can be made to produce errors on $(n-2)/2$ items, which is at least $(1/2 - \epsilon)n$
for $n\geq 1/\epsilon$.
\end{proof}

\begin{remark}
Suppose that we allow repeated input items to appear non-consecutively when ordered according to $P$. Formally, this can be modeled by the universe $\mathbb{Q} \times \mathbb{N}$. The input item $(x, \mathit{id})$ consists of a rational number $x$, called the value of an item, and its identification number $\mathit{id}$. Input to the GTM problem is a subset of $\mathbb{Q} \times \mathbb{N}$. The GTM problem is defined entirely in terms of values of input items, and repeated values are distinguished by their $\mathit{id}$. Fix a priority function $P$ and choose $n$ different items of value $1$, i.e., $i_1, \ldots, i_n$. Suppose that we have an item of value $0$ that is of higher priority than any of the $i_j$ and an item of value $2$ that is of lower priority than any of the $i_j$. Then we can repeat the argument of Case 2 from the proof above. 

Otherwise, pick $2n+1$ distinct items of value $1$. Call them $i_1, i_2, \ldots, i_{2n+1}$ in the decreasing order of priorities. For items $i_{n+1}, \ldots, i_{2n}$ either (a) there is no item of value $0$ of higher priority than all of them, or (b) there is no item of value $2$ of lower priority than all of them (otherwise, it is covered by the previous case). To handle (a), pick an arbitrary item of value $0$. This item has lower priority than $i_{n+1}$, and, in particular, lower priority than all of $i_1, \ldots, i_n$. This can be handled similarly to Case 1 in the proof above. Thus, the only scenario left is (b) when there is no item of value $2$ of lower priority than all of $i_{n+1}, \ldots, i_{2n}$. Pick $n$ arbitrary items of value $2$ -- they all have priority higher than $i_{2n+1}$. Thus, this can again be handled similarly to Case~1 in the proof above.
\end{remark}

Finally, we show that an online algorithm requires a lot of advice to achieve good performance for the GTM problem. The proof is a minor modification of a reduction from 2-SGKH to the Binary Separation Problem (see \cite{BKLL16} for details). We present the proof in its entirety for completeness.

\begin{theorem}
For the Greater Than Mean problem and any $\epsilon\in\EPSINT$,
no online algorithm reading fewer than $(1-H(\epsilon))(n-1)$ advice bits can
make fewer than $\epsilon n$ mistakes for large enough~$n$.
\end{theorem}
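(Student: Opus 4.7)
The plan is to reduce the 2-SGKH problem on inputs of length $n-1$ to the GTM problem on inputs of length $n$, mirroring the classical reduction from 2-SGKH to the Binary Separation Problem of~\cite{BKLL16}, and then to invoke Theorem~\ref{thm:2sgkh-lb}. The role played there by an externally chosen threshold is played in GTM by the mean $m$; since $m$ is a function of the whole input, the GTM instance must include a final \emph{balancer} item that forces $m$ to equal the desired threshold.

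Given $(b_1, \ldots, b_{n-1}) \in \{0,1\}^{n-1}$, I would construct the GTM input $(y_1, \ldots, y_n)$ by binary search. Initialize $[L_1, R_1] := [0,1]$; for $i = 1, \ldots, n-1$, let $y_i := (L_i + R_i)/2$ and set $[L_{i+1}, R_{i+1}] := [L_i, y_i]$ if $b_i = 1$, otherwise $[L_{i+1}, R_{i+1}] := [y_i, R_i]$. Define $t := (L_n + R_n)/2$ and
$$y_n := n t - \sum_{i=1}^{n-1} y_i,$$
so that the arithmetic mean $m$ of $(y_1, \ldots, y_n)$ equals exactly $t$. The binary-search invariant guarantees $t \in (L_i, R_i)$ with $t < y_i$ iff $b_i = 1$, hence the correct GTM decision is $[y_i > m] = b_i$ for $i \leq n-1$. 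Crucially, $y_i$ depends only on $b_1, \ldots, b_{i-1}$, not on $b_i$ itself.

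To complete the reduction, given a GTM algorithm $A$ using $b$ advice bits and making fewer than $\epsilon n$ mistakes, I would build a 2-SGKH algorithm $A'$ on length-$(n-1)$ inputs using the same $b$ advice bits. The oracle of $A'$ knows the bit-string, constructs the GTM instance above, and forwards whatever advice $A$'s oracle would produce on it. At step $i$, having learned $b_1, \ldots, b_{i-1}$ from feedback, $A'$ computes $y_i$, feeds it to $A$, and outputs $A$'s decision $d_i$ as its own guess for $b_i$. By the invariant $[y_i > m] = b_i$, every mistake of $A'$ corresponds to a mistake of $A$ on some $y_i$ with $i \leq n-1$, so $A'$ makes fewer than $\epsilon n$ mistakes on an input of length $n-1$.

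Applying Theorem~\ref{thm:2sgkh-lb} to $A'$ with the parameter $\epsilon' := \epsilon n/(n-1)$, so that $\epsilon'(n-1) = \epsilon n$, yields $b \geq (1 - H(\epsilon'))(n-1)$. The main technical subtlety I foresee is that $\epsilon' > \epsilon$, so this is formally a slightly weaker statement than the claimed $(1 - H(\epsilon))(n-1)$; however, because $\epsilon' \to \epsilon$ as $n \to \infty$ and $H$ is continuous at $\epsilon$, the gap $(H(\epsilon') - H(\epsilon))(n-1)$ is $O(1)$ and is absorbed by the ``for large enough $n$'' clause, so the stated bound follows asymptotically.
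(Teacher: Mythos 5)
Your reduction is essentially the paper's own proof: the same binary-search construction of the $y_i$'s from the 2-SGKH bits, with a final balancer item chosen so that the mean equals the midpoint of the final interval, preserving advice bits and putting mistakes in one-to-one correspondence. The only difference is bookkeeping — the paper indexes the 2-SGKH instance by $n$ and the GTM instance by $n+1$ and silently absorbs the resulting off-by-one in the parameters, whereas you make that slack explicit via the reparametrization $\epsilon' = \epsilon n/(n-1)$ — so the argument is correct and not a genuinely different route.
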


\begin{proof}
We present a reduction from the 2-SGKH problem to the GTM problem. Let $A$ be an online algorithm with advice for the GTM problem. Our reduction is presented in Algorithm~\ref{alg:2sgkh-to-gtm}. In the course of the reduction, an online input $x_1, \ldots, x_{n}$ of length $n$ for the 2-SGKH problem is converted into an online input $y_1, \ldots, y_{n+1}$ of length $n+1$ for the GTM problem with the following properties: The number of advice bits is preserved and for each $i \in [n]$, our algorithm $A$ for 2-SGKH makes a mistake on $x_i$ if and only if $A$ makes a mistake on $y_i$. This would finish the proof of the theorem. 

 Let $S = \{i \in [n] \mid x_i = 1\}$ and $T = [n]\setminus S$. The reduction uses a technique similar to binary search to make sure that $\forall i \in S$ and $\forall j \in T$ we have $y_i > y_j$, i.e., all the $y_i$ corresponding to $x_i = 1$ are larger than all the $y_j$ corresponding to $x_j = 0$. Then $y_{n+1}$ is chosen to make sure that the mean of the entire stream $y_1, \ldots, y_{n+1}$ lies between the smallest $y_i$ with $i \in S$ and the largest $y_j$ with $j \in T$. This implies that $y_i$ is greater than the mean if and only if the corresponding $x_i = 1$.

\begin{algorithm}[H]
\caption{Reduction from 2-SGKH to GTM}\label{alg:2sgkh-to-gtm}
\begin{algorithmic}
\Procedure{Reduction-2-SGKH-to-GTM}{}
\State  $\ell_1 \gets 0$, $u_1 \gets 1$
\For {$i = 1$ to $n$ }
\State $y_i \gets (\ell_i+u_i)/2$
\If {$A$ predicts $y_i$ is greater than mean}
\State predict $x_i = 1$
\Else
\State predict $x_i = 0$
\EndIf
\State {receive actual $x_i$}
\If {actual $x_i = 1$}
\State $u_{i+1} \gets y_i$, $\ell_{i+1} \gets \ell_i$
\Else
\State $u_{i+1} \gets u_i$, $\ell_{i+1} \gets y_i$
\EndIf
\EndFor
\State $y_{n+1} \gets \frac{n+1}{2}(\ell_{n+1}+u_{n+1})-\sum_{i=1}^{n} y_i$
\EndProcedure
\end{algorithmic}
\end{algorithm}

The following invariants are easy to see and are left to the reader: (1) $u_i > \ell_i$; (2) if $x_i = 1$, then $u_i > y_i \ge u_{i+1}$; (3) if $x_i = 0$, then $\ell_i < y_i \le \ell_{i+1}$.

The required  properties of the reduction follow immediately from the invariants. Let $i \in S$ and $j \in T$. Then,
$y_i \ge u_{n+1} > \ell_{n+1} \ge y_j$.
Finally, observe that $y_{n+1}$ is chosen so that the mean is 
$\sum_{i = 1}^{n+1} y_i /(n+1) = \sum_{i=1}^{n} y_i/(n+1) + y_{n+1}/(n+1) 
= (1/2)(\ell_{n+1} + u_{n+1}).$
This mean correctly separates $S$ from $T$.
\end{proof}


\section{Pair Matching Problem}
\label{sec:pair_matching}
We introduce an online problem called Pair Matching. The input consists of a sequence of $n$ distinct rational numbers between 0 and 1, i.e.,  $x_1, \ldots, x_n \in \mathbb{Q} \cap [0,1]$. After the arrival of $x_i$, an algorithm has to answer if there is a $j \in [n] \setminus \{i\}$ such that $x_i + x_j = 1$, in which case we refer to $x_i$ and $x_j$ as forming a pair and say that $x_i$ has a matching value, $x_j$. The answer ``accept'' is correct if $x_j$ exists, and ``reject'' is correct if it does not. Note that since the $x_i$ are all distinct, if $x_i=\frac12$, the correct answer is ``reject'', since $\frac12$ cannot have a matching value.

We let $\pairs(x_1, \ldots, x_n)$ denote the number of pairs in the input $x_1, \ldots, x_n$. 

\subsection{Online Setting}

Analyzing Pair Matching in the online setting is relatively straightforward for both deterministic and randomized algorithms. 

We start with a simple upper bound achieved by a deterministic online algorithm.

\begin{theorem}
For Pair Matching, there exists a $2$-competitive algorithm, answering correctly
on $n-\pairs(x_1, \ldots, x_n)$ input items.
\end{theorem}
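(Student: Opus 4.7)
The algorithm I would propose is the natural one: maintain the set $S$ of values seen so far; when $x_i$ arrives, accept iff $1 - x_i \in S$, and in either case insert $x_i$ into $S$. This is clearly an online algorithm (no lookahead, irrevocable decisions), and the set operations are elementary.

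The correctness count falls out by partitioning the input items into pairs and singletons. For each matched pair $\{x_i, x_j\}$ with $i < j$, the first item $x_i$ is rejected (incorrect, since it actually has a matching value $x_j$ appearing later), and the second item $x_j$ is accepted (correct, since $1 - x_j = x_i$ is already in $S$). So each pair contributes exactly one correct answer. For each unmatched $x_i$, the value $1 - x_i$ never appears in the input, so $S$ never contains it and the algorithm rejects $x_i$, which is correct. Summing, the number of correct answers is $\pairs(x_1,\ldots,x_n) + (n - 2\,\pairs(x_1,\ldots,x_n)) = n - \pairs(x_1,\ldots,x_n)$, as claimed.

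For the competitive ratio, note that $\OPT(I) = n$ on every input (an offline algorithm simply answers correctly on each item), while the obvious bound $2\,\pairs(x_1,\ldots,x_n) \le n$ gives $\ALG(I) = n - \pairs(x_1,\ldots,x_n) \ge n/2$. Hence $\OPT(I) \le 2\,\ALG(I)$, so the algorithm is $2$-competitive with additive constant $\alpha = 0$.

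There is essentially no obstacle in this proof; the only thing to be slightly careful about is the case $x_i = \tfrac{1}{2}$, but this is handled automatically since the $x_i$ are distinct, so $1 - x_i = x_i$ never lies in $S$ and the algorithm correctly rejects it. The statement is tight in the sense that when the input consists of $n/2$ pairs the algorithm attains ratio exactly $2$, so there is nothing to optimize further at this stage.
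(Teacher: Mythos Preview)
Your proposal is correct and follows essentially the same approach as the paper: the same greedy algorithm (accept $x_i$ iff $1-x_i$ has already appeared), the same partition into matched pairs and singletons yielding $n - \pairs(x_1,\ldots,x_n)$ correct answers, and the same bound $\pairs \le n/2$ to conclude $2$-competitiveness. Your treatment is slightly more explicit about the $x_i = \tfrac12$ corner case and about $\OPT(I) = n$, but the argument is the paper's.
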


\begin{proof}
The algorithm works as follows: suppose the algorithm has already given answers for items $x_1, \ldots, x_{i-1}$, and a new item $x_i$ arrives. If there is a $j \in [i-1]$ such that $x_i + x_j = 1$, then the algorithm answers ``accept''. Otherwise, the algorithm answers ``reject''. Observe that the algorithm always answers correctly on all items that do not come from pairs. There are $n-2 \cdot \pairs(x_1, \ldots, x_n)$ such items. Moreover, it always answers correctly on exactly a half of all items that form pairs -- namely, it answers incorrectly on the first item from a given pair and answers correctly on the second item from the given pair. Thus, the algorithm gives $\pairs(x_1, \ldots, x_n)$ correct answers in addition to the $n- 2 \cdot \pairs(x_1, \ldots, x_n)$ answers given correctly on items not forming pairs. The total number of correct answers is $n - \pairs(x_1, \ldots, x_n)$. Observe that $\pairs(x_1, \ldots, x_n) \le n/2$. Thus, this simple online algorithm gives correct answers on $\ge n/2$ items, achieving competitive ratio of at least $2$.
\end{proof}

Next, we show that the above upper bound is actually tight.

\begin{theorem}
For Pair Matching, no deterministic online algorithm can achieve a competitive ratio less than $2$.
\end{theorem}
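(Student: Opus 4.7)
The plan is a straightforward adversary argument. Since $\OPT$ always answers correctly on every item (it knows all pairs), we have $\OPT(I) = n$. To rule out a competitive ratio strictly less than $2$, it suffices to exhibit an adaptive adversary that, against any deterministic online algorithm $A$, forces $A$ to answer correctly on at most $\lceil n/2 \rceil$ items, so that $\OPT(I)/\ALG(I) \to 2$ as $n \to \infty$.

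I would process the input in consecutive blocks of two rounds indexed by $k = 1, 2, \ldots, \lfloor n/2 \rfloor$. Maintain a countable supply of unused rationals in $(0, 1/2)$, say $a_k = 1/(k+2)$ and $b_k = 1/(2k+3)$, chosen so that no two previously presented values sum to $1$ except for a matched pair deliberately introduced by the adversary. At round $2k-1$ the adversary presents $a_k$ (a ``fresh'' value not equal to $1/2$ and not matching anything seen so far). Now branch on $A$'s answer:
\begin{itemize}
\item If $A$ answers \emph{reject} on $a_k$, the adversary presents $1 - a_k$ at round $2k$. Then $a_k$ does form a pair, so the reject at round $2k-1$ is wrong, and $A$ scores at most $1$ out of these two rounds.
\item If $A$ answers \emph{accept} on $a_k$, the adversary presents the fresh unmatched value $b_k$ at round $2k$ and commits to never providing $1 - a_k$ or $1 - b_k$. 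Then $a_k$ has no matching value in the input, so the accept is wrong, and again $A$ scores at most $1$ out of these two rounds.
\end{itemize}

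A short consistency check ensures the adversary's choices never accidentally create or destroy a pair: because every $a_k, b_k$ lies in $(0, 1/2)$ and the only ``large'' values ever introduced are the deliberate matches $1 - a_k$, one sees that the only pairs in the resulting input are those the adversary explicitly set up. Summing over the $\lfloor n/2 \rfloor$ blocks (and adding the trivial one leftover round if $n$ is odd), $A$ answers correctly on at most $\lceil n/2 \rceil$ items while $\OPT = n$, giving $\OPT(I) \ge 2 \ALG(I) - 1$. Hence no algorithm can be $c$-competitive for any $c < 2$.

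The only subtlety, and the reason to lay out the fresh values explicitly, is to rule out spurious matches that could make the adversary's reject-forcing move or accept-forcing move invalid; once all chosen values lie strictly below $1/2$ and are pairwise distinct, this is immediate. Everything else is just bookkeeping.
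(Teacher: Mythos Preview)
Your proof is correct and uses essentially the same adversary as the paper: present a fresh value, then complete it to a pair if the algorithm rejected and withhold the match forever if it accepted; the only cosmetic difference is that the paper lets accept-blocks have size one (so every item in such a block is wrong) rather than padding to size two with a dummy $b_k$. One small fix: your concrete sequences collide ($a_3 = 1/5 = b_1$), so the input would not consist of distinct rationals---just pick any two disjoint sequences in $(0,1/2)$, e.g.\ $a_k = 1/(4k)$ and $b_k = 1/(4k+2)$.
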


\begin{proof}
Let $A$ be a hypothetical deterministic algorithm for Pair Matching. An adversary keeps track of the current pool of possible inputs $X$. Initially, $X = \mathbb{Q} \cap [0,1]$. An adversary picks an arbitrary number $x \in X$ as the first input item. Depending on how $A$ answers on $x$ there are two cases.

Case~1: If $A$ answers ``reject'' on $x$, then the adversary picks $1-x$ as the next input item. One can assume that $A$ answers correctly on $1-x$. Then, the adversary removes $x$ and $1-x$ from $X$ and proceeds.

Case~2: If $A$ answers ``accept'' on $x$, then the adversary removes $x$ and $1-x$ from $X$ (thus, the
matching value $1-x$ is never given) and proceeds.

Observe that in Case~1 the algorithm makes mistakes on $1/2$ of the sub-input corresponding to that case. In Case~2, removing $x$ and $1-x$ from $X$ ensures that $x$ is not part of a pair in the input. Thus, the algorithm makes mistakes on the entire sub-input corresponding to Case~2.
\end{proof}

Next, we analyze randomized online algorithms for Pair Matching. A modification of the simple deterministic algorithm results in a better competitive ratio.

\begin{theorem}
For Pair Matching, there exists a randomized online algorithm that in expectation answers correctly on $2n/3$
input items.
\end{theorem}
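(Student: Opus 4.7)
The plan is to randomize the deterministic algorithm of the previous theorem by flipping a biased coin whenever the complement of the incoming item has not yet appeared. Concretely, on arrival of $x_i$: if $1 - x_i = x_j$ for some $j < i$, answer ``accept'' (which is always correct); otherwise, answer ``accept'' with probability $p$ and ``reject'' with probability $1 - p$, for a value $p$ to be chosen later.

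To analyze the expected number of correct answers, I would partition the inputs into three disjoint classes. Let $k = \pairs(x_1, \ldots, x_n)$. There are $n - 2k$ items whose matching value never appears (call them \emph{singletons}), $k$ items that are the earlier member of a pair (call them \emph{openers}), and $k$ items that are the later member of a pair (call them \emph{closers}). Each closer is answered correctly with probability $1$. Each singleton should be rejected and is answered correctly with probability $1 - p$. Each opener should be accepted (its match arrives later) and is answered correctly with probability $p$. By linearity of expectation, the expected number of correct answers equals $(1-p)(n-2k) + p k + k = (1-p)n + (3p - 1)k$.

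The remaining step is to eliminate the dependence on the adversarially chosen $k \in \{0, 1, \ldots, \lfloor n/2 \rfloor\}$ by setting the coefficient $3p - 1$ to zero, i.e., $p = 1/3$, which yields an expected number of correct answers equal to $\tfrac{2}{3} n$ on every input. There is no real obstacle here; the only subtlety is noticing that the two error sources (false accepts on singletons and false rejects on openers) must be balanced by the randomization -- for $p > 1/3$ the adversary would set $k = 0$, and for $p < 1/3$ the adversary would set $k = \lfloor n/2 \rfloor$, so the choice $p = 1/3$ is both forced and tight.
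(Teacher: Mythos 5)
Your proposal is correct and is essentially the paper's own proof: the same algorithm (always accept when the complement has already appeared, otherwise flip a biased coin), the same three-way partition into singletons, openers, and closers, and the same optimal bias (your accept probability $p=1/3$ is exactly the paper's reject probability $\alpha=2/3$). The only cosmetic difference is that you kill the coefficient of $k$ directly, whereas the paper bounds $\pairs \le n/2$ and optimizes over $\alpha$, arriving at the same value $2n/3$.
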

\begin{proof}
Let $\alpha \in [0,1]$ be a parameter to be specified later. Intuitively, $\alpha$ denotes the probability with which our algorithm is going to answer ``reject'' on input items which are not obviously part of a pair. More specifically, suppose that the algorithm has already given answers for items $x_1, \ldots, x_{i-1}$, and a new item $x_i$ arrives. If there is a $j \in [i-1]$ such that $x_i + x_j = 1$, then the algorithm answers ``accept''. Otherwise, the algorithm answers ``reject'' with probability $\alpha$. We can analyze the performance of the algorithm by analyzing the following three groups of input items:
\begin{description}
\item[{\normalfont\textit{Input items that are not part of a pair}:}] There are $n- 2 \cdot  \pairs(x_1, \ldots, x_n)$ such input items and the algorithm answers correctly on $\alpha(n-2 \cdot \pairs(x_1, \ldots, x_n))$ in expectation.
\item[{\normalfont\textit{Input items that are the first of a pair}:}] There are $\pairs(x_1, \ldots, x_n)$ such input items and the algorithm answers correctly on $(1-\alpha)\pairs(x_1, \ldots, x_n)$ of them in expectation.
\item[{\normalfont\textit{Input items that are the last of a pair}:}] There are $\pairs(x_1, \ldots, x_n)$ such input items and the algorithm answers correctly on all of them.
\end{description}
Thus, in expectation the algorithm gives correct answers on
\[\begin{array}{cl}
  &  \alpha(n- 2 \cdot \pairs(x_1, \ldots, x_n)) + (1-\alpha) \pairs(x_1, \ldots, x_n) + \pairs(x_1, \ldots, x_n) \\[.5ex]
= & \alpha n - (3\alpha - 2) \pairs(x_1, \ldots, x_n)
\end{array}\]
items. Observe that as long as $\alpha \ge 2/3$, we can use the bound $\pairs(x_1, \ldots, x_n) \le n/2$ to derive a lower bound of $\alpha n - (3\alpha -2) n/2$ on the number of correct answers, and the largest value, $2n/3$, is attained for $\alpha=2/3$. 
Values of $\alpha$
less than $2/3$ give poorer results for the case when there are no pairs.
\end{proof}

Next, we show that the above algorithm is an optimal randomized algorithm for Pair Matching.

\begin{theorem}
For Pair Matching, no randomized online algorithm can achieve a competitive ratio less than $3/2$.
\end{theorem}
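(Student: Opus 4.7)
The plan is to invoke Yao's minimax principle with a carefully chosen distribution $D$ over inputs of length $n$ (assume $3 \mid n$; a small pad handles the general case), and to show that every deterministic online algorithm answers correctly on at most $2n/3$ items in expectation under $D$. Since $\OPT(I) = n$ for every instance, this will exhibit, for any randomized algorithm, some instance on which the expected number of correct answers is at most $2n/3$, forcing the competitive ratio to be at least $n/(2n/3) = 3/2$.

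To construct $D$, I split the $n$ positions into $m = n/3$ consecutive \emph{rounds} of three positions each. For each round $i$ I fix, deterministically, two fresh rationals $a_i, b_i \in (0, 1/2)$ so that the values $\{a_j, b_j, 1-a_j, 1-b_j\}_{j \le m}$ are pairwise distinct (always possible since $\mathbb{Q} \cap (0, 1/2)$ is infinite). Independently for each round I flip a fair coin $c_i \in \{H, T\}$. If $c_i = H$, round $i$ presents $a_i, b_i, 1-a_i$ in that order; if $c_i = T$, it presents $a_i, b_i, 1-b_i$. In either case the round contains exactly one pair and one unpaired item, no cross-round pair can ever form, and $\OPT$ still scores $n$.

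The analysis is round-by-round. Fix any deterministic algorithm $A$. On the third item of round $i$ the matching value ($a_i$ or $b_i$) has already appeared in the prefix, so $A$ can at best answer ``accept'' and be correct, contributing at most one correct answer. On $a_i$ and $b_i$ the view of $A$ is a deterministic function of $c_1, \ldots, c_{i-1}$ alone (the fresh rationals are deterministic and hence carry no information about $c_i$), so $A$'s output on each of these two items is independent of $c_i$. The correct answer on $a_i$ is ``accept'' iff $c_i = H$, and on $b_i$ is ``accept'' iff $c_i = T$, so $A$ is correct on each with probability exactly $1/2$, regardless of strategy. Summing per round yields $E_D[\mathrm{correct}(A)] \le (\tfrac12 + \tfrac12 + 1)\cdot m = 2n/3$, and Yao's principle transfers this to the randomized setting. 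The main subtlety I expect is arguing cleanly the independence between $A$'s view at the first two positions of round $i$ and the coin $c_i$: this is what forces the ``$1/2$ per item'' bound, and it rests on the fresh rationals being chosen independently of the coins together with the mutual independence of the $c_i$'s across rounds.
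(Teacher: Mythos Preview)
Your proof is correct, and it takes a genuinely different route from the paper. The paper argues directly against an arbitrary randomized algorithm by an adaptive adversary: it presents a fresh value $x$, computes the marginal probability $p$ that the algorithm rejects $x$ (this is well-defined once the input prefix is fixed), and then either pairs $x$ with $1-x$ (if $p>2/3$) or leaves $x$ unpaired (if $p\le 2/3$), so that in either case the expected fraction of correct answers on that block is at most $2/3$. Your approach instead fixes an oblivious input distribution in advance and invokes Yao's principle: the three-item rounds with a single fair coin deciding which of $a_i,b_i$ gets paired force any deterministic algorithm to be correct with probability exactly $1/2$ on each of the first two items, by the clean independence of the algorithm's view from $c_i$.

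What each approach buys: the paper's adaptive construction avoids Yao's principle and pairs each decision of the adversary directly to the algorithm's behaviour, but it requires tracking marginal probabilities and implicitly iterating the case split; your Yao argument is arguably more transparent, since the hard distribution is explicit and the per-round accounting is a one-line symmetry observation. Your construction also has the minor advantage that the input distribution is the same for all algorithms, whereas the paper's bad instance depends on the algorithm. Both arguments handle the additive constant in the competitive ratio in the same way, by letting $n\to\infty$.
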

Next, we show that the above algorithm is an optimal randomized algorithm for Pair Matching.

\begin{proof}
Let $A$ be a hypothetical randomized algorithm for Pair Matching. An adversary keeps track of the current pool of possible inputs $X$. Initially, $X = \mathbb{Q} \cap [0,1]$. An adversary picks an arbitrary number $x \in X$ as the first input item. Let $p$ be the probability that $A$ answers ``reject'' on $x$. Depending on the value of $p$, there are two cases.

Case~1: $p > 2/3$, then the adversary picks $1-x$ as the next input item. One can assume that $A$ answers correctly on $1-x$. Then, the adversary removes $x$ and $1-x$ from $X$ and proceeds.

Case~2: $p \le 2/3$, then the adversary removes $x$ and $1-x$ from $X$ and proceeds.

Observe that in Case~1, the algorithm is given two input items and it answers correctly on $(1-p) + 1 = 2-p$ input items in expectation. Thus, the fraction of correct answers is $1-p/2 < 1 - 1/3 = 2/3$.

In Case~2, removing $x$ and $1-x$ from $X$ ensures that $x$ is not part of a pair in the input. Thus, the algorithm answers correctly on $p \le 2/3$ of the input in this case in expectation.
\end{proof}

Lastly, we prove that online algorithms need a lot of advice in order to start approaching a competitive ratio of~$1$ for Pair Matching.

\begin{theorem}
\label{thm:online_pm_lb}
For Pair Matching and any $\epsilon\in\EPSINT$,
no deterministic online algorithm reading fewer than $(1-H(\epsilon))n/2$ advice bits can
make fewer than $\epsilon n$ mistakes for large enough~$n$.
\end{theorem}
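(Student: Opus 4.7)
The plan is to reduce 2-SGKH to Pair Matching and then invoke Theorem~\ref{thm:2sgkh-lb}. Given a hypothetical deterministic PM algorithm $A$ using $b$ advice bits, I would construct an SGKH algorithm $A'$ that consumes the same advice tape. For a 2-SGKH instance $x_1, \ldots, x_m$ with $m = n/2$, $A'$ builds an online PM instance of length $n = 2m$ on the fly: for each $x_i$ it picks a fresh $a_i \in \mathbb{Q} \cap (0, 1/2)$ distinct from all earlier items and their complements, feeds $a_i$ to $A$, and predicts $\hat{x}_i = 1$ iff $A$'s answer is ``accept''. After the true $x_i$ is revealed, $A'$ feeds a matching $b_i := 1 - a_i$ if $x_i = 1$, and a fresh unmatched $b_i$ if $x_i = 0$.

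Without loss of generality, I would first assume that $A$ is \emph{smart}: $A$ answers ``accept'' whenever the arriving item's complement has already appeared in the input. Smartifying $A$ never increases its mistakes (the match is certain), so a lower bound established against smart algorithms carries over to all algorithms. With smartness, $A$ never errs on $b_i$ when $x_i = 1$, because $a_i$ is in the history. Hence the number of 2-SGKH mistakes of $A'$ is at most the number of PM mistakes of $A$, and applying Theorem~\ref{thm:2sgkh-lb} to $A'$ on a length-$m$ instance would then yield the desired advice lower bound.

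The main obstacle is matching the constants exactly. The bookkeeping just described only gives PM mistakes $\geq$ SGKH mistakes, which through Theorem~\ref{thm:2sgkh-lb} would yield the weaker bound $b \geq (1 - H(2\epsilon))n/2$. To tighten this to the stated $(1 - H(\epsilon))n/2$, I would strengthen the reduction so that each SGKH mistake forces \emph{two} PM mistakes rather than one. In the mistake case with $x_i = 0$ and $A(a_i) = $ accept, $A'$ picks $b_i$ adversarially: a fresh value on which $A$ also answers accept, left unmatched to induce a second PM error. Such $b_i$ must exist, because an $A$ that always rejects fresh items would wrongly reject every first-of-pair item and would blow past the $\epsilon n$ mistake budget on pair-rich inputs, contradicting the hypothesis. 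The symmetric case $x_i = 1$, $A(a_i) = $ reject is handled analogously by introducing an auxiliary item $c_i := 1 - a_i$ (which smart $A$ accepts correctly, adding no mistake) while $b_i$ is chosen fresh and accept-inducing, again forcing a second PM error. Padding PM with easy matched fillers keeps the length exactly $n$, and a careful inductive argument takes care of the freshness constraints.
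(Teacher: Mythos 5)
Your first two paragraphs are, in essence, the paper's own proof: fix fresh values in $(0,1)$, feed one to $A$ per 2-SGKH round, predict $x_i=1$ exactly when $A$ accepts, feed the complement only when the revealed bit is $1$, and assume (harmlessly, as you argue via ``smartness'') that $A$ is correct on second elements of pairs, so that $A$'s mistakes are at least the guessing algorithm's mistakes. Where you diverge is the constant chasing in your last paragraph, and that is where the problem lies. Note first that the paper does \emph{not} perform the doubling you attempt: it stops at exactly the bookkeeping you dismiss as too weak, taking $n$ to be the number of 2-SGKH bits and observing that the constructed Pair Matching instance has length at most $2n$, so that an algorithm reading fewer than $(1-H(\epsilon))\cdot(\text{input length})/2 \le (1-H(\epsilon))n$ advice bits already falls under Theorem~\ref{thm:2sgkh-lb} and makes at least $\epsilon n$ mistakes. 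In other words, the factor $2$ is absorbed into the advice threshold via the ``length at most $2n$'' observation, not recovered by forcing two Pair Matching errors per guessing error; measured against the Pair Matching input length, the paper's argument yields exactly the reparametrized bound you computed.

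The genuine gap is the step where you claim that, after $A$ wrongly accepts $a_i$, you can pick a fresh value on which $A$ also answers ``accept''. $A$'s answer on a fresh item is a function of the advice tape and the history, and under the particular advice the oracle supplies for the very input you are constructing, $A$ may simply reject every fresh item at that moment. Your justification --- that an algorithm which always rejects fresh items would exceed the mistake budget on pair-rich inputs --- compares different inputs carrying different advice strings and different histories; the cooperative oracle tailors the advice to the input, so, for instance, a single advice bit can switch $A$ between reject-all-fresh and accept-all-fresh modes, and on the instance you are building no accept-inducing $b_i$ need exist. You cannot invoke ``contradicting the hypothesis'' locally inside the reduction, because the hypothesis constrains $A$'s total mistakes only on inputs you actually complete, with their own advice. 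The symmetric case ($x_i=1$, $A$ rejects $a_i$) fails for the same reason, and additionally your reparametrization uses Theorem~\ref{thm:2sgkh-lb} at parameter $2\epsilon$, which leaves its stated range once $\epsilon>1/4$. So the strengthened reduction does not go through as described, while the unstrengthened reduction plus the length-accounting above is precisely the paper's proof.
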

\begin{proof}
We prove the statement by a reduction from the 2-SGKH problem. Let $A$ be an online algorithm solving Pair Matching. Fix an arbitrary infinite sequence of distinct numbers $(y_i)_{i=1}^\infty$ from $[0,1]$.

Let $x_1, \ldots, x_n$ be the input to 2-SGKH. The online reduction works as follows. Suppose that we have already processed $x_1, \ldots, x_{i-1}$ and we have to guess the value of $x_i$. We query $A$ on $y_i$. If $A$ answers that $y_i$ is a part of a pair, then the algorithm predicts $x_i = 1$; otherwise, the algorithm predicts $x_i = 0$. Then the actual value of $x_i$ is revealed. If the actual value is $1$, then the reduction algorithm feeds $1-y_i$ as the next input item to $A$. We assume that $A$ answers correctly on $1-y_i$ in this case. If the actual value of $x_i$ is $0$, the algorithm proceeds to the next step.

Note that the number of mistakes that the reduction algorithm makes is exactly equal to the number of mistakes that $A$ makes. The statement of the theorem follows by observing that the input to $A$ is of length at most $2n$. 
\end{proof}

\subsection{Priority Setting}

In this section, we show that Theorem~\ref{thm:online_pm_lb} also holds in the priority setting. The proof becomes a bit more subtle, so we give it in full detail.

\begin{theorem}
\label{thm:pair_matching_priority_lb}
For Pair Matching and any $\epsilon\in\EPSINT$,
no fixed priority algorithm reading fewer than $(1-H(\epsilon))n/2$ advice bits can
make fewer than $\epsilon n$ mistakes for large enough~$n$.
\end{theorem}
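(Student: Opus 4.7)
The plan is to mirror the online reduction from Theorem~\ref{thm:online_pm_lb}, i.e., reduce 2-SGKH to Pair Matching in the priority setting, but with a new preliminary step to handle the fact that the algorithm $A$ processes input items in the order dictated by its own fixed priority function $P$ rather than an order chosen by the reduction. Given a fixed-priority PM algorithm $A$ with priority function $P$ and $b$ bits of advice, the goal is to produce a 2-SGKH algorithm $B$ using the same number of advice bits and at most as many mistakes, and then appeal to Theorem~\ref{thm:2sgkh-lb}.

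The first and main step is, depending only on $P$, to select a sequence $y_1,\ldots,y_n \in (\mathbb{Q}\cap[0,1])\setminus\{1/2\}$ satisfying:
(i) $y_1 >_P y_2 >_P \cdots >_P y_n$, so the $y_i$'s are delivered to $A$ in exactly this order; and
(ii) $y_i >_P 1-y_i$ for each $i$, so that whenever the reduction chooses to include $1-y_i$ in the PM input, it is seen by $A$ strictly after $y_i$. Call $a\ne 1/2$ \emph{good} if $a >_P 1-a$. Since exactly one element of every pair $\{a,1-a\}$ is good, there are infinitely many good elements; since $P$ is injective on the countable universe, their priority values form an infinite subset of $\mathbb{R}$, from which we can extract $n$ values in strictly decreasing order. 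Observe that if $y_i$ and $y_j$ are both good then $y_j\ne 1-y_i$ (otherwise the pair $\{y_i,1-y_i\}$ would contain two good elements), so the $2n$ elements $\{y_i\}\cup\{1-y_i\}$ are all distinct.

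Given 2-SGKH input $x_1,\ldots,x_n$, the reduction $B$ forms $I=\{y_1,\ldots,y_n\}\cup\{1-y_i : x_i=1\}$ and uses the PM oracle's advice for $I$ (which the 2-SGKH oracle can compute from $x$) as its own. Then $B$ simulates $A$'s execution on $I$ in priority order. The crucial observation is that combining (i) and (ii) gives $1-y_j <_P y_j \le_P y_i$ for every $j\ge i$, so the only items of $I$ that $A$ sees before $y_i$ are the earlier $y_j$'s ($j<i$) and a subset of $\{1-y_j : j<i,\ x_j=1\}$. All of this is determined by $x_1,\ldots,x_{i-1}$, which $B$ has already observed at the moment it must predict $x_i$; therefore $B$ can faithfully simulate $A$ up to the point $A$ processes $y_i$ and output $x_i=1$ iff $A$ accepts $y_i$. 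Because $y_i$ has a matching value in $I$ exactly when $x_i=1$, the prediction on $x_i$ is correct iff $A$'s answer on $y_i$ is correct, so $B$'s total mistakes are at most $A$'s total mistakes on $I$, and $|I|\le 2n$.

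Plugging $B$ into Theorem~\ref{thm:2sgkh-lb} then yields the advertised trade-off. The chief obstacle is step one: an adversarial $P$ can give the countable universe a very unfavorable order type (for instance a well-ordering, which admits no infinite decreasing chain), but since we only need a finite decreasing chain of length $n$ among the \emph{good} elements and $P$ is injective on an infinite set, a simple pigeonhole argument suffices.
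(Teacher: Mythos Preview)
Your proof is correct and follows the same overall strategy as the paper: reduce from 2-SGKH by associating to each $x_i$ a ``query'' element whose matching value is included exactly when $x_i=1$, and arrange that the priority algorithm processes this query element only after items determined by $x_1,\ldots,x_{i-1}$.

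The presentations differ in one respect worth noting. The paper picks $n$ arbitrary values, sorts all $2n$ elements $\{y_i,1-y_i\}$ by $P$, and then manages the feed to $A$ dynamically with a priority queue, always choosing as the next query element the current head of the remaining sorted list. You instead observe in advance that from every pair $\{a,1-a\}$ exactly one member is ``good'' ($a>_P 1-a$), preselect $n$ good elements, and sort them once; the priority-respecting property then follows from the single inequality $1-y_j<_P y_j\le_P y_i$ for $j\ge i$. The two constructions produce the same sequence of query elements (the paper's dynamically chosen $z$'s are precisely the good members of its pairs, in decreasing priority), so the difference is purely expository: your static preprocessing replaces the paper's priority-queue bookkeeping and makes the key invariant more transparent. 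Your explicit check that no $y_j$ equals $1-y_i$ (because a pair cannot have two good members) is a nice touch that the paper leaves implicit.
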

\begin{proof}
We prove the statement by a reduction from the \emph{online problem} 2-SGKH. Let $A$ be a priority algorithm solving Pair Matching, and let $P$ be the corresponding priority function. (Note that we assume that the algorithm knows $P$; this is the case in all of our priority algorithm reductions.) The reduction follows the proof of Theorem~\ref{thm:online_pm_lb} closely. The idea is to transform the online input to 2-SGKH into an input to Pair Matching. The difficulty arises from having to present the transformed input in the online fashion while respecting the priority function $P$. 

Let $x_1, \ldots, x_n$ be the input to 2-SGKH. The online reduction works as follows. The online algorithm picks $n$ distinct numbers $y_1, \ldots, y_n$ from $[0,1]$ and creates a list $z_1, \ldots, z_{2n}$ consisting of $y_i$ and $1-y_i$ sorted according to $P$. The algorithm keeps a  (max-heap ordered) priority queue $Q$ of elements from $z_i$ as well as a subsequence $Z$ of $z_1, \ldots, z_{2n}$. The reduction always picks the first element $z$ from $Z$. We maintain the invariant that $1-z$ appears later in $Z$ according to $P$. If needed, the reduction algorithm will enter $1-z$ onto $Q$ to be simulated as an input to $A$ at the right time later on.

{\em Initialization.} Initially, $Q$ is  empty and $Z$ is the entire sequence $z_1, \ldots, z_{2n}$. Before the element $x_1$ arrives, the algorithms feeds $z_1$ to $A$. If $A$ answers that $z_1$ is a part of a pair, then the online algorithm predicts $x_1 = 1$; otherwise the algorithm predicts $x_1 = 0$. Then the online algorithm finds $j$ such that $z_j = 1-z_1$ and updates $Z$ by deleting $z_1$ and $z_j$.
Then $x_1$ is revealed. If the actual value of $x_1$ is $1$, the algorithm inserts $z_j$ into $Q$; otherwise the algorithm does not modify $Q$.

{\em Middle step.} Suppose that the algorithm has processed $x_1, \ldots, x_{i-1}$ and has to guess the value of $x_i$. The algorithm picks the first element $z$ from the subsequence $Z$. While the top element of $Q$ has higher priority than $z$ according to $P$, the algorithm deletes that element from the priority queue and feeds it to $A$. Then, the algorithm feeds $z$ to $A$. The next steps are similar to the initialization case. If $A$ answers that $z$ is a part of a pair, then the online algorithm predicts $x_i = 1$; otherwise the algorithm predicts $x_i = 0$. The online algorithm finds $z'$ in $Z$ such that $z = 1-z'$, and updates $Z$ by deleting $z$ and $z'$. Then $x_i$ is revealed. If the actual value of $x_i$ is $1$, the algorithm inserts $z'$ into $Q$; otherwise the algorithm does not modify $Q$.

{\em Post-processing.} After the algorithm finishes processing $x_n$, it feeds the remaining elements (in priority order) from $Q$ to $A$.

It is easy to see that the online algorithm feeds a subsequence of $z_1, \ldots, z_{2n}$ to $A$ in the correct order according to $P$. In addition, the online algorithm makes exactly the same number of mistakes as $A$ (assuming that $A$ always answers correctly on the second element of a pair). The statement of the theorem follows since the size of the input to $A$ is at most $2n$.
\end{proof}


\section{Reduction Template}
\label{sec:redux_temp}
Our template is restricted to binary decision problems since the goal is to derive inapproximations based on the Pair Matching problem. (See also the discussion in Section~\ref{app:bipartite_matching}.) 
In reducing from Pair Matching to a problem $B$, we assume that we 
have a priority algorithm $\ALG$  with advice, for problem $B$, with 
priorities defined by $P$. Based on $\ALG$ and $P$, we 
define a priority algorithm $\ALG'$ with advice and a priority function, $P'$,
for the  Pair Matching problem.
Input items $x_1,x_2,\ldots,x_n$ in $\mathbb{Q} \cap [0,1]$ to Pair Matching
arrive in an order specified by the priority function we define,
based on $P$.
We assume that we are informed when the input ends and can take steps
at that point to complete our computation.
Knowing the size $n$ of the input, which one naturally would in many
situations after the initial sorting according to $P'$, would of course be sufficient.

Based on the input to the Pair Matching problem,
we create input items to problem $B$, and they have to be presented
to $\ALG$, respecting the priority function $P$. Responses from $\ALG$
are then used by $\ALG'$ to help it answer ``accept'' or ``reject'' for its current
$x_i$. Actually, $\ALG$ will always answer correctly for a request 
$x_j=1-x_i$ when $i<j$, so the responses from $\ALG$ are only used when
this is not the case.
The main challenge is to ensure that the input items to $\ALG$ are presented in the order 
determined by $P$, because the decision as to whether or not they are presented
needs to be made in time, without knowing whether or not the matching value 
will arrive.

Here, we give a high level description of a specific kind of gadget 
reduction. A gadget $G$ for problem $B$ is simply some constant-sized
instance for $B$, i.e., a collection of input items that satisfy the 
consistency condition for problem $B$.
For example, if $B$ is a graph problem in the vertex arrival, vertex adjacency
model, $G$ could be a constant-sized graph,  and the universe then contains all possible pairs of the form: a vertex name coupled with a list of possible neighboring vertex names. Note that each possible vertex name exists many times as a part of an input, because it can be coupled with many different possible lists of vertex names.
The consistency condition
must apply to the actual input chosen, so for each vertex name $u$ which is listed
as a neighbor of $v$, it must be the case that $v$ is listed as a neighbor of $u$.

The gadgets used in a reduction will be created in pairs (gadgets in a pair may be
 isomorphic to each other, so that they are the same up to renaming),
one pair for each input item less than or equal to $1/2$ (for $x=1/2$,
the gadget will only be used to assign a priority to $x=1/2$).
One gadget from the pair is presented to $\ALG$ when $1-x$ appears later
in the input; and the other gadget when it does not.
Using fresh names in the input items for problem $B$, we ensure that each 
input item less than $\frac12$ to the Pair Matching problem has its own 
collection of input items for its gadgets for problem $B$.
The pair of gadgets
associated with an input item $x\leq 1/2$ can be written $(G_x^1,G_x^2)$.
The same universe of input items is used for both of these gadgets.

We write $\max_P G$ to denote the first item according to $P$ from the universe of
input items for~$G$, 
i.e., the highest priority item. 
For now, assume that $\ALG$ responds ``accept'' or ``reject'' to any possible
input item.
This captures problems such as vertex cover, independent set, clique, etc. 

For each $x\leq 1/2$, the gadget pair
satisfies two conditions: the first item condition, and 
the distinguishing decision condition. The \emph{first item condition}
says that the first input item $m_1(x)$ according to $P$ gives no information 
about which gadget it is in. To accomplish this, 
we define the priority function for $\ALG'$ as $P'(x) = P(\max_P G_x^1)$ 
for all $x\leq 1/2$ and set $m_1(x)=\max_P G_x^1 = \max_P G_x^2$ 
(the second equality holds since we assume the two gadgets
have the same input universe). The \emph{distinguishing decision condition} says 
that the decision with regards to item $m_1(x)$ that results in the 
optimal value of the objective function in $G_x^1$ is different from the 
decision that results in the optimal value of the objective function in $G_x^2$.
This explains why the one gadget is presented to $\ALG$ when 
$1-x$ appears later in the input sequence and the other when it does not.

Now that the first item of the gadget associated with $x$ is defined, 
the remaining actual input items in
the gadget pair for $x$ must be completely defined according to the 
distinguishing
decision condition. This gives two sets (overlapping, at least in $m_1(x)$)
of input items. The item with highest priority among all of the items
in the actual gadget pair, ignoring $m_1(x)$, is called $m_2(x)$, and
we define $P'(1-x)= P(m_2(x))$ for $x<1/2$. Thus, we guarantee the following
list of properties:
$x< 1/2$ will arrive before $1-x$ in the input sequence for Pair Matching for $\ALG'$,
$m_1(x)$ will arrive for algorithm $\ALG$ at the same time, $\ALG$'s
response for $m_1(x)$ can define the response of $\ALG'$ to $x$, and the
decision as to which gadget in the pair is presented for $x$ can be made
at the time $1-x$ arrives or $\ALG'$ can determine that it will not arrive
(because either the input sequence ended or an $x'$ with lower priority
than $1-x$ arrived).

To warm up, we start with an example reduction from Pair Matching to a somewhat artificial problem. This reduction then serves as a model for the general reduction template.

\subsection{Example: Triangle Finding}

Consider the following priority problem in the vertex arrival, vertex adjacency 
model: for each vertex $v$, decide whether or not $v$ belongs to some triangle (a cycle 
of length $3$) in the entire input graph. The answer
``accept'' is correct if $v$ belongs to some triangle, and otherwise the answer
should be ``reject''.
We refer to this problem as
Triangle Finding. This problem might look artificial and it is optimally solvable  
offline in time $O(n^2)$, but as mentioned above, advice-preserving 
reductions between priority problems require subtle manipulations of a 
priority function.
The Triangle Finding problem allows us to highlight this issue in a relatively 
simple setting.

\begin{theorem}
For Triangle Finding and any $\epsilon\in\EPSINT$,
no fixed priority algorithm reading at most $(1-H(\epsilon))n/8$ advice bits can
make fewer than $\epsilon n/4$ mistakes.
\end{theorem}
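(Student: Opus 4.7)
The plan is to follow the reduction template of Section~\ref{sec:redux_temp}, starting from the Pair Matching lower bound in Theorem~\ref{thm:pair_matching_priority_lb}. For each Pair Matching item $x \leq 1/2$ I would allocate four fresh vertex names $u_x, a_x, b_x, c_x$ and define a $4$-item gadget pair: $G_x^1$ consists of the triangle on $\{u_x, a_x, b_x\}$ together with the pendant edge $u_x c_x$, while $G_x^2$ consists of the star with center $u_x$ and leaves $a_x, b_x, c_x$. Both gadgets share the item $m_1(x) = (u_x, \{a_x, b_x, c_x\})$, whose correct answer is ``accept'' in $G_x^1$ (since $u_x$ lies on the triangle $u_x a_x b_x$) but ``reject'' in $G_x^2$ (since $u_x$ is the center of a triangle-free star); this supplies the distinguishing decision condition. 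By choosing the fresh names so that $m_1(x)$ is the top-$P$ item in the gadget's input universe, we satisfy the first item condition. We then set $P'(x) = P(m_1(x))$ and $P'(1-x) = P(m_2(x))$, where $m_2(x)$ is the second-highest-$P$ item in $G_x^1 \cup G_x^2$.

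The reduction itself proceeds online in the standard template manner. When the Pair Matching item $x \leq 1/2$ arrives, $\ALG'$ feeds $m_1(x)$ to $\ALG$ and outputs $\ALG$'s answer, interpreting ``accept'' as ``$x$ has a matching value''. If $1-x$ later arrives, $\ALG'$ commits to $G_x^1$ and enqueues its remaining items; if the stream advances past priority $P'(1-x) = P(m_2(x))$ without $1-x$ appearing, $\ALG'$ commits to $G_x^2$ instead. A global priority queue releases the enqueued items of all currently open gadgets to $\ALG$ in $P$-order, interleaving contributions from different gadget pairs. Requests for Pair Matching items $y > 1/2$ are answered directly by checking whether $1-y$ appeared earlier at its higher $P'$-priority, so $\ALG'$ never errs on them.

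By the distinguishing decision condition, every error of $\ALG'$ on $x \leq 1/2$ is an error of $\ALG$ on $m_1(x)$, hence $\ALG$ incurs at least as many mistakes as $\ALG'$ while using the same advice. Each $x \leq 1/2$ contributes exactly four Triangle Finding items, so a Pair Matching input of size $N'$ induces a Triangle Finding input of size $n \leq 4 N'$, with equality in the worst case when every Pair Matching item is at most $1/2$. Consequently $(1-H(\epsilon))n/8 \leq (1-H(\epsilon))N'/2$ and $\epsilon n/4 \leq \epsilon N'$, so any $\ALG$ using at most $(1-H(\epsilon))n/8$ advice bits with fewer than $\epsilon n/4$ mistakes would induce an $\ALG'$ contradicting Theorem~\ref{thm:pair_matching_priority_lb}. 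The main obstacle is ensuring that the simulation respects the priority order $P$ while the commitment between $G_x^1$ and $G_x^2$ remains open; the choice $P'(1-x) = P(m_2(x))$ is calibrated precisely so that $\ALG'$ learns whether $1-x$ is in the input at the moment $\ALG$ is about to require an item that could differentiate the two gadgets.
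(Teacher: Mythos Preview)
Your reduction has a genuine gap at the first item condition. You assert that ``by choosing the fresh names so that $m_1(x)$ is the top-$P$ item in the gadget's input universe'' you can force $(u_x,\{a_x,b_x,c_x\})$ to have the highest priority among all items of $G_x^1 \cup G_x^2$. But your gadgets are not regular: in $G_x^1$ the degrees are $3,2,2,1$ and in $G_x^2$ they are $3,1,1,1$. The priority function $P$ is fixed by the adversary \emph{before} you pick names, and it may well order items primarily by degree. For instance, if $P$ always ranks any degree-$1$ item above any degree-$3$ item (with names used only to break ties), then for \emph{every} choice of $u_x,a_x,b_x,c_x$ some leaf item $(a_x,\{u_x\})$ outranks $(u_x,\{a_x,b_x,c_x\})$, and your $m_1(x)$ is never $\max_P G_x^2$. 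The template then breaks: $\ALG$ would see a leaf first, and its answer on that leaf neither distinguishes the two gadgets nor corresponds to the Pair Matching decision for $x$.

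The paper's construction avoids this precisely by making every item in the gadget universe isomorphic. It takes four vertex names per $x$ and considers \emph{all} twelve degree-$2$ items over them; whatever $P$ places first among those twelve becomes $m_1(x)$, and that item, being degree $2$, can always be completed to both a $3$-cycle $C_x^3$ and a $4$-cycle $C_x^4$. No renaming trick is needed, because the first item is \emph{discovered} from $P$ rather than prescribed in advance. To repair your argument you would either need to use a regular gadget pair (as the paper does), or supply a separate gadget pair for each possible degree of the top-priority item, in the spirit of the Vertex Cover example in Section~\ref{VC}.
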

\begin{proof}
We prove this theorem by a reduction from the Pair Matching problem. Let $\ALG$
be an algorithm for the Triangle Finding problem, and let $P$ be the 
corresponding priority function.
Let $x_1, \ldots, x_n$ be the input to 
Pair Matching. We define a priority function $P'$ and a valid input 
sequence $v_1, \ldots, v_m$ to Triangle Finding. When $x_1, \ldots, x_n$ is 
presented according to $P'$ to our priority algorithm for Pair Matching, it 
is able to construct $v_1, \ldots, v_m$ for $\ALG$, respecting the priority 
function $P$. Moreover, our algorithm for Pair Matching will be able to use 
answers of $\ALG$ to answer the queries about $x_1, \ldots, x_n$. 

Now, we discuss how to define $P'$. With each number $x \in \mathbb{Q} \cap [0,1/2]$,
we associate four unique vertices $v_x^1, v_x^2, v_x^3, v_x^4$.
The universe consists of all input items of the form 
$(v_x^i, \{v_x^j, v_x^k\})$ with $i, j, k \in [4]$, $i \not\in \{ j,k\}$ and $j < k$;
there are $12$ input items for each $x$: $4$ possibilities for the vertex, and for each of the 
$\binom{3}{2}=3$ possibilities for the ordered pair of neighbors.
Let $m_1(x)$ be the first item according to $P$ among the $12$ items.
Using only the input items from the $12$ items we are currently considering,
we extend this item in two ways, to a 3-cycle $C_x^3$ and to a 4-cycle $C_x^4$.
When we write $C_x^3$ or $C_x^4$, we mean the set of items forming the 3-cycle or 4-cycle,
respectively. Now, $P'$ is defined as follows:
\[
P'(x) = \left\{\begin{array}{ll}
             P(m_1(x)), & \mbox{if $x \le 1/2$} \\
             \max_{ g \in (C_{1-x}^3 \cup C_{1-x}^4) \setminus \{ m_1(1-x)\}} P(g), & \mbox{otherwise}
               \end{array}
        \right.
\]
In other words, if $x > 1/2$, we set $P'(x)$ to be the first element other 
than $m_1(1-x)$ in $C_{1-x}^3 \cup C_{1-x}^4$. 
In terms of our high level description given at the beginning of this section, 
$(C_x^3, C_x^4)$ form the pair of gadgets -- a triangle and a square. By 
construction, this pair of gadgets satisfies the first item condition. By the 
definition of the problem, the optimal decision for all vertices in $C_x^3$ is 
``accept'' (belongs to a triangle) and the optimal decision for all vertices in 
$C_x^4$ is ``reject'' (does not belong to a triangle). Thus, these gadgets also 
satisfy the distinguishing decision condition.
 
Let $x_1, \ldots, x_n$ denote the order input items are presented to our algorithm
as specified by $P'$.
Our algorithm constructs an input to $\ALG$ which is consistent 
with $P$ along the following lines: for each $x \le 1/2$ that appears 
in the input, the algorithm constructs either a three-cycle or a four-cycle 
(disjoint from the rest of the graph). Thus, each $x \le 1/2$ is associated 
with one
connected component. During the course of the algorithm, each connected 
component will be in one of the following three states: undecided, committed, 
or finished. When $x \le 1/2$ arrives, the algorithm initializes the 
construction with the item $m_1(x)$ and sets the component status to undecided. 
It answers ``accept'' (there will be a matching pair) for $x$ if $\ALG$ 
responds ``accept'' (triangle) for $m_1(x)$, and it answers ``reject'' if $\ALG$ 
responds ``reject'' (square).

Note that for any $x\leq 1/2$, $P'(x)>P'(1-x)$, so if $x' > 1/2$ arrives
and $1-x'$ has not appeared earlier, $\ALG'$ can simply reject $x'$ and
does not need to present anything to $\ALG$. If $x$ has arrived and at some point, $1-x$ arrives, the algorithm commits 
to constructing the 
3-cycle $C_x^3$. If $\ALG'$ had guessed correctly that $1-x$ would arrive,
it is because $\ALG$ responded ``accept'' for $m_1(x)$) and also guessed correctly.
If $\ALG'$ had guessed that $1-x$ would not arrive, it is because $\ALG$ guessed that
a square would arrive, and both guessed incorrectly. If some $x'$ arrives with $P'(x')<P'(1-x)$ for some $x\not=x'$ and $x$ has arrived earlier,
then $\ALG'$ can be certain that $1-x$ will not arrive.
It commits to constructing the 4-cycle $C_x^4$.
Thus, if $\ALG'$ answered ``reject'' for $x$, it answered correctly,
and a square makes $\ALG$'s decision
for $m_1(x)$ correct. Similarly, if $\ALG'$ answered ``accept'' for $x$,
it answered incorrectly, so a square makes $\ALG$'s decision incorrect.

At the end of the input, $\ALG'$ finishes off by
checking which values of $x$ have arrived without $1-x$ arriving or
some $x'$ with higher priority than $1-x$ arriving, and $\ALG$
again commits to the 4-cycle, as in the other case where $1-x$ does not
arrive.

Throughout the algorithm, there are several connected components, each of which can
be undecided, committed, or finished.
Note that an undecided component corresponding to input $x$ consists of a single item $m_1(x)$. 
Upon receiving an item $y$, the algorithm first checks whether some undecided components have turned  into committed ones: namely if an undecided component consisting of $m_1(x)$ satisfies $P'(1-x) > P'(y)$, it switches the status to a committed component according to the rules described above. Then, the algorithm feeds input items corresponding to committed yet unfinished connected components to $\ALG$ and does so in the order of $P$ up until the priority of such items falls below $P'(y)$ (this can be done by maintaining a priority queue). Finally, the algorithm processes the item $y$ by either creating a new component or by turning an undecided component into a decided one. Then, the algorithm moves to the next item. Due to our definition of $P'$ and this entire process, the input constructed for $\ALG$ is valid and consistent with $P$. 
Observe that the input to $\ALG'$ is of size at most $4n$, so the number of advice bits
must be divided by four relative to Theorem~\ref{thm:pair_matching_priority_lb},
and the theorem follows.
\end{proof}

\subsection{General Template}
\label{sec:general_template}

In this subsection, we establish two theorems that give general templates for gadget reductions from Pair Matching -- one for maximization problems and one for minimization problems. The high level overview has been given at the beginning of this section.

We let $\ALG(I)$ denote the objective function for $\ALG$ on input $I$. The \emph{size} of a gadget $G$, denoted by $\SIZE{G}$, is the number of input items specifying the gadget. We write $\OPT(G)$ to denote the best value of the objective function on $G$. Recall that we focus on problems where a solution is specified by making an accept/reject decision for each input item. We write $\BAD(G)$ to denote the best value of the objective function attainable on $G$ after making the wrong decision for the first item (the item with highest priority, $\max(G)$), i.e., if there is an optimal solution that accepts (rejects) the first item of $G$, then $\BAD(G)$ denotes the best value of the objective function \emph{given} that the first item was rejected (accepted). We say that the objective function for a problem $B$ is \emph{additive}, if for any two instances $I_1$ and $I_2$ to $B$ such that $I_1 \cap I_2 = \emptyset$, we have $\OPT(I_1 \cup I_2) = \OPT(I_1) + \OPT(I_2)$.

\begin{theorem}
\label{thm:template-minimization}
Let $B$ be a minimization problem with an additive objective function. Let $\ALG$ be a fixed priority algorithm with advice for $B$ with a priority function $P$. Suppose that for each $x \in \mathbb{Q} \cap [0, 1/2]$ one can construct a pair of gadgets $(G_x^1, G_x^2)$ satisfying the following conditions:
\begin{description}
\item[{\normalfont\textit{The first item condition}:}] $m_1(x)=\max_P G_x^1 = \max_P G_x^2$.
\item[{\normalfont\textit{The distinguishing decision condition}:}] the optimal decision for $m_1(x)$ in $G_x^1$ is different from the optimal decision for $m_1(x)$ in $G_x^2$ (in particular, the optimal decision is unique for each gadget). Without loss of generality, we assume $m_1(x)$ is accepted in an optimal solution in $G_x^1$.
\item[{\normalfont\textit{The size condition}:}] the gadgets have finite sizes, and we let $s=\max_x (|G_x^1|, |G_x^2|)$, where the cardinality of a gadget is the number of input items it consists of.
\item[{\normalfont\textit{The disjoint copies condition}:}] for $x\not=y$ and $i,j\in\SET{1,2}$, input items making up $G_x^i$ and $G_y^j$ are disjoint.
\item[{\normalfont\textit{The gadget $\OPT$ and $\BAD$ condition}:}] the values $\OPT(G_x^1)$, $\BAD(G_x^1)$ as well as $\OPT(G_x^2)$, $\BAD(G_x^2)$ are independent of $x$, and we denote them by $\OPT(G^1)$, $\BAD(G^1)$, $\OPT(G^2)$, and $\BAD(G^2)$; we assume that $\OPT(G^2) \ge \OPT(G^1)$.
\end{description}
Define $r=\min\SET{\frac{\BAD(G^1)}{\OPT(G^1)}, \frac{\BAD(G^2)}{\OPT(G^2)}}$. Then for any $\epsilon \in (0,\frac12)$, no fixed priority algorithm reading fewer than $(1-H(\epsilon))n/(2s)$ advice bits can achieve an approximation ratio smaller than 
\[1 + \frac{\epsilon (r-1) \OPT(G^1)}{\epsilon \OPT(G^1) + (1-\epsilon)\OPT(G^2)}.\]
\end{theorem}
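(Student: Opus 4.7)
The plan is to reduce Pair Matching to $B$: starting from any priority algorithm $\ALG$ for $B$ with priority $P$ and fewer than $(1-H(\epsilon))n/(2s)$ advice bits, I will construct a priority algorithm $\ALG'$ for Pair Matching that uses the same advice tape and then invoke Theorem~\ref{thm:pair_matching_priority_lb}. Define $P'$ so that each pair $(x,1-x)$ is consecutive in $P'$-order: for $x\leq 1/2$ set $P'(x)=P(m_1(x))$, and for $x<1/2$ set $P'(1-x)=P(m_2(x))$. By the first-item and disjoint-copies conditions this is well-defined, and it forces $P'(x)>P'(1-x)$. While reading Pair Matching items in $P'$-order, $\ALG'$ maintains a max-priority queue $Q$ (ordered by $P$) of items from already-committed gadgets that have not yet been fed to $\ALG$. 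When $x\leq 1/2$ arrives, $\ALG'$ first drains from $Q$ every item whose $P$-priority is at least $P(m_1(x))$, feeding them to $\ALG$, then feeds $m_1(x)$ itself and records $\ALG$'s decision. By the distinguishing-decision condition, $\ALG'$ answers ``accept'' for $x$ iff $\ALG$ accepted $m_1(x)$.

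When $1-x$ subsequently arrives, $\ALG'$ commits to $G_x^1$ and enqueues the remaining items of that gadget. If instead some $x'$ arrives with $P'(x')<P'(1-x)=P(m_2(x))$, or the input ends before $1-x$ is seen, $\ALG'$ concludes that $1-x$ will never arrive and commits to $G_x^2$, enqueuing its remaining items. The priority threshold $P(m_2(x))$ is exactly the $P$-priority of the highest-priority item other than $m_1(x)$ in $G_x^1\cup G_x^2$, so the commit decision is always made in time for that item to be delivered to $\ALG$ in correct $P$-order. Any items left in $Q$ when the Pair Matching input ends are flushed to $\ALG$. By the distinguishing-decision condition, $\ALG'$ errs on $x$ precisely when $\ALG$'s decision on $m_1(x)$ is not optimal inside the committed gadget. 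Let $k_i$ be the number of committed $G^i$-gadgets and $m_i$ the number of first-item mistakes $\ALG$ makes inside them; the Pair Matching mistake count is $m_1+m_2$. Since each gadget contributes at most $s$ items, $n\leq s(k_1+k_2)$, so the advice budget $(1-H(\epsilon))n/(2s)\leq(1-H(\epsilon))(k_1+k_2)/2$ lies below the threshold of Theorem~\ref{thm:pair_matching_priority_lb}, which then forces a Pair Matching input satisfying $m_1+m_2\geq\epsilon(k_1+k_2)$.

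Additivity of the objective gives $\OPT=k_1\OPT(G^1)+k_2\OPT(G^2)$, and each first-item mistake inside a $G^i$-gadget increases $\ALG$'s cost by at least $\BAD(G^i)-\OPT(G^i)\geq(r-1)\OPT(G^i)$, yielding
\[
\frac{\ALG}{\OPT}-1 \;\geq\; \frac{(r-1)\bigl[m_1\OPT(G^1)+m_2\OPT(G^2)\bigr]}{k_1\OPT(G^1)+k_2\OPT(G^2)}.
\]
Because $\OPT(G^1)\leq\OPT(G^2)$, the algorithm minimizes the right-hand side by piling mistakes onto $G^1$-gadgets, so $m_1=\min\{\epsilon(k_1+k_2),\,k_1\}$ and $m_2$ absorbs the slack. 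Writing $t=k_1/(k_1+k_2)$ and splitting into the cases $t\geq\epsilon$ and $t<\epsilon$, the resulting expression is a function of $t$ that is increasing on $[\epsilon,1]$ and decreasing on $[0,\epsilon]$, so its minimum over $t\in[0,1]$ is attained at $t=\epsilon$, where both cases agree and simplify to the claimed $\epsilon(r-1)\OPT(G^1)/[\epsilon\OPT(G^1)+(1-\epsilon)\OPT(G^2)]$; this is therefore a lower bound on the ratio regardless of which $(k_1,k_2)$ the Pair Matching lower bound actually forces. The main technical obstacle will be making the first two paragraphs rigorous: verifying that the queue-and-commit mechanism really presents items to $\ALG$ in $P$-order, that every committed gadget is fully drained before the input ends, and that the commit decision is always feasible in time, all of which hinge on the fact that every item of $G_x^1\cup G_x^2$ other than $m_1(x)$ has $P$-priority at most $P(m_2(x))=P'(1-x)$.
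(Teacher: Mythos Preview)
Your proposal is correct and follows essentially the same approach as the paper: the reduction from Pair Matching, the definition of $P'$ via $m_1$ and $m_2$, the priority-queue commit mechanism, and the one-to-one correspondence between $\ALG'$'s mistakes and $\ALG$'s wrong first-item decisions are all exactly as in the paper's proof. Your ratio calculation is parameterized slightly differently (you minimize explicitly over $t=k_1/(k_1+k_2)$ and the split $(m_1,m_2)$, whereas the paper pessimistically assumes all correct answers land on $G^2$-gadgets and minimizes over a single parameter $w$), but the two optimizations are equivalent and yield the same bound; the detailed verification you flag as the ``main technical obstacle'' is carried out in the paper via explicit pseudocode.
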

\begin{proof}
The proof proceeds by constructing a reduction algorithm (fixed priority with advice) for Pair Matching that uses $\ALG$ to make decisions about input items. We start by defining a priority function for the reduction algorithm.

Define $m_2(x)$ to be the highest priority input item in $G_x^1$ or $G_x^2$ different from $m_1(x)$, i.e., 
\[m_2(x) = \max \left(\left(G_x^1 \cup G_x^2\right) \setminus \SET{m_1(x)}\right).\]

We define a priority function $P'$ as follows.

\[P'(x) = 
 \left\{\begin{array}{ll}
    P(m_1(x)), & \mbox{if $x\le \frac12$} \\[1ex]
    P(m_2(1-x)), & \mbox{if $x>\frac12$}
 \end{array}\right.\]

\noindent For the Pair Matching problem, we denote the given input sequence ordered by $P'$ as $I=\SEQ{x_1, \ldots, x_n}$. We have to give an overall strategy for how the reduction algorithm for Pair Matching handles an input item $x_i$ and which input items it presents to $\ALG$. In order to do this, we use a priority queue $Q$ which is a max-heap ordered based on the priority of input items to problem~$B$, with the purpose of presenting these input items in the correct order (respecting $P$, highest priority items appear first). When $\ALG'$ commits to a particular gadget in a pair, the remainder of that gadget (all inputs except $m_1(x)$ which has already been presented) are inserted into $Q$.

\begin{algorithm}
\caption{Reduction Algorithm, $\ALG'$}\label{alg:template}
\textbf{Given:} $\ALG$ with priority function $P$ for problem~$B$ \\
\begin{algorithmic}[1]
\State $Q.\init()$ \Comment{Initialize $Q$ to empty}
\For {$i = 1, \ldots, n$}
  \If {$x_i \geq \frac12$}
    \If {$x_i=1-x_j$ for some $j < i$}
      \State accept $x_i$
      \State insert $G_{x_j}^1 \setminus \SET{m_1(x_j)}$ into $Q$ \label{line-match-one}
    \Else
      \State reject $x_i$
    \EndIf
  \EndIf
  \For {all $1\leq j<i$ s.t.\ $P'(x_{i-1}) > P'(1-x_j) > P'(x_i)$} \Comment{no $1-x_j$} \label{line-test}
    \State insert $G_{x_j}^2 \setminus \SET{m_1(x_j)}$ into $Q$ \label{line-non-match}
  \EndFor
  \While {$Q.\findmax() > P'(x_i)$}
     \State present $Q.\deletemax()$ to $\ALG$ \label{line-catch-up}
  \EndWhile
  \If {$x_i< \frac12$}
     \State present $m_1(x_i)$ to $\ALG$ \label{line-present-mone}
     \State answer the same as $\ALG$ \label{line-answer-same}
  \EndIf
\EndFor
\For {all $1\leq j\leq n$ s.t.\ $P'(1-x_j) \leq  P'(x_n)$} \Comment{no $1-x_j$} \label{line-test-two}
  \State insert $G_{x_j}^2 \setminus \SET{m_1(x_j)}$ into $Q$ \label{line-non-match-two}
\EndFor
\While {not $Q.\isempty()$}
   \State present $Q.\deletemax()$ to $\ALG$
\EndWhile
\end{algorithmic}
\end{algorithm}

By definition, $P'(x_i) > P'(1-x_i)$ for all $x_i < 1/2$. Thus, $m_1(x_i)$ is presented to $\ALG$ in Line~\ref{line-present-mone} before the remaining parts of the same gadget \emph{associated with} $x_i$ are inserted into $Q$ in one of Lines~\ref{line-match-one}, \ref{line-non-match},
or~\ref{line-non-match-two}.

Since the priority of any $x_i < \frac12$ is defined to be the priority of $m_1(x_i)$, the $m_1(x_i)$s are presented in the correct relative order.

Clearly, input items entered into the priority queue, $Q$, are extracted and presented to $\ALG$ in the correct relative order, and before any $m_1(x_i)$ is presented, higher priority items are presented first in Line~\ref{line-catch-up}. The remaining issues are whether the remainder of the gadget
associated with some $x_j$ is entered into $Q$ early enough relative to some $m_1(x_i)$ from another gadget and whether all gadgets are eventually completely presented to $\ALG$.

By the definition of $m_2$, the priority of $m_2(x_j)$ is at least the priority of any remaining input item in the gadget associated with $x_j$.

Consider the point in time when $x_i$ arrives.
If $1-x_j$ arrived earlier or $P'(1-x_j)$ is greater than $P'(x_{i-1})$, the gadget associated with $x_j$ would have been processed correctly or have been inserted into $Q$ earlier. Before $m_1(x_i)$ is presented to $\ALG$, a check is made to see if $P(m_2(x_j))=P'(1-x_j)>P'(x_i)=P(m_1(x_i))$. If the check in the \textbf{if}-statement is positive,
the entire remaining part of gadget for $x_j$ is inserted into $Q$ at this point
in Line~\ref{line-non-match}.

If some $x_j<\frac12$ arrives, but $1-x_j$ never arrives, if $P'(1-x_j)\leq 
P'(x_n)$, this is discovered in Line~\ref{line-test-two} and the remainder of $G^2_{x_j}$ is
presented to $\ALG$ in Line~\ref{line-non-match-two}.

Thus, input items are presented to $\ALG$ in the order defined by its priority function~$P$.

Now we turn to the approximation ratio obtained. We want to lower bound the number of incorrect decisions by $\ALG$. We focus on the input items which are $m_1(x_i)$ for some input $x_i < 1/2$ to the Pair Matching Problem and assume that $\ALG$ answers correctly on anything else.

When $\ALG'$ receives an $x_i < 1/2$, in Line~\ref{line-answer-same} it answers the same for $x_i$ as $\ALG$ does for $m_1(x_i)$. By considering
the four cases where the gadget associated with $x_i$ is later inserted into $Q$, we can see that this answer for $x_i$ was correct for $\ALG'$
if and only if the answer $\ALG$ gave for $m_1(x_i)$ could lead to the optimal result for the gadget associated with $x_i$.

\begin{itemize}

\item If $x_i=1-x_j$ arrives, 
then $G^1_{x_j}$ is committed to and the remainder of $G^1_{x_j}$ is inserted into $Q$ in Line~\ref{line-match-one}. 
If $\ALG'$ answered ``accept'' to $x_j$, then $\ALG$ has accepted $m_1(x_j)$
and  $\ALG$ could obtain the optimal result on  $G^1_{x_j}$, by the definition of these gadget pairs.
If $\ALG'$ answered ``reject'' to $x_j$, then $\ALG$ has rejected $m_1(x_j)$
and  $\ALG$ cannot obtain the optimal result on  $G^1_{x_j}$, again by the definition of these gadget pairs.

\item If $x_i=1-x_j$ does not arrive, 
then $G^2_{x_j}$ is committed to and the remainder of $G^2_{x_j}$ is inserted into $Q$ in Lines~\ref{line-non-match} or \ref{line-non-match-two}. 
If $\ALG'$ answered ``reject'' to $x_j$, then $\ALG$ has rejected $m_1(x_j)$
and  $\ALG$ could obtain the optimal result on  $G^2_{x_j}$, by the definition of these gadget pairs.
If $\ALG'$ answered ``accept'' to $x_j$, then $\ALG$ has accepted $m_1(x_j)$
and  $\ALG$ cannot obtain the optimal result on  $G^2_{x_j}$, again by the definition of these gadget pairs.

\end{itemize}

 We know from Theorem~\ref{thm:pair_matching_priority_lb} that for any $\epsilon \in (0, 1/2]$, any priority algorithm with advice length  less than $(1-H(\epsilon))n/2$ makes at least $\epsilon n$ mistakes. Since we want to lower bound the performance ratio of $\ALG$, and since a ratio larger than one decreases when increasing the numerator and denominator by equal quantities, we can assume that when $\ALG$ answers correctly, it is on the gadget with the larger $\OPT$ -value, $G^2$. For the same reason, we can assume that the ``at least $\epsilon n$'' incorrect answers are in fact exactly $\epsilon n$, since classifying some of the incorrect answers as correct just lowers the 
ratio. For the incorrect answers, assume that the gadget $G^1$ is presented $w$ times, and, thus, the gadget, $G^2$, $\epsilon n - w$ times.

Denoting the input created by $\ALG'$ for $\ALG$ by $I$, we obtain the following, where we use that $\BAD(G^j) \geq r\OPT(G^j)$.
\[\begin{array}{rcl}

\frac{\ALG(I)}{\OPT(I)} & \geq &

\frac{(1-\epsilon)n\OPT(G^2)+w\BAD(G^1)+(\epsilon n - w)\BAD(G^2)}{
  (1-\epsilon)n\OPT(G^2)+w\OPT(G^1)+(\epsilon n - w)\OPT(G^2)}
\\[1.5ex]
& \geq &

\frac{(1-\epsilon)n\OPT(G^2)+wr\OPT(G^1)+(\epsilon n - w)r\OPT(G^2)}{
  (1-\epsilon)n\OPT(G^2)+w\OPT(G^1)+(\epsilon n - w)\OPT(G^2)}
\\[1.5ex]
& = &

1 + \frac{w(r-1)\OPT(G^1) + (\epsilon n -w)(r-1)\OPT(G^2)}{
  w\OPT(G^1) + (n-w)\OPT(G^2)}

\end{array}\]

Taking the derivative with respect to $w$ and setting equal to zero gives no solutions for $w$, so the extreme values must be found at the endpoints of the range for $w$ which is $[0,\epsilon n]$.

Inserting $w=0$, we get $1+\epsilon(r-1)$, while $w=\epsilon n$ gives
\[1 + \frac{\epsilon (r-1) \OPT(G^1)}{\epsilon \OPT(G^1) + (1-\epsilon)\OPT(G^2)}.\]

The latter is the smaller ratio and thus the lower bound we can provide.

\end{proof}

The following theorem for maximization problems is proved analogously.

\begin{theorem}
\label{thm:template-maximization}
Let $B$ be a maximization problem with an additive objective function. Let $\ALG$ be a fixed priority algorithm with advice for $B$ with a priority function $P$. Suppose that for each $x \in \mathbb{Q} \cap [0, 1/2]$ one can construct a pair of gadgets $(G_x^1, G_x^2)$ satisfying the conditions in Theorem~\ref{thm:template-minimization}.
Set $r=\min\SET{\frac{\OPT(G^1)}{\BAD(G^1)}, \frac{\OPT(G^2)}{\BAD(G^2)}}$. Then for any $\epsilon \in (0,\frac12)$, no fixed priority algorithm reading fewer than $(1-H(\epsilon))n/(2s)$ advice bits can achieve an approximation ratio smaller than \[1 + \frac{\epsilon (r-1) \OPT(G^1)}{\epsilon \OPT(G^1) + (1-\epsilon)r\OPT(G^2)}.\]
\end{theorem}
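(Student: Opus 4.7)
The plan is to reuse the reduction from Theorem~\ref{thm:template-minimization} essentially verbatim: the priority function $P'$ built from $m_1(x)$ and $m_2(x)$, together with Algorithm~\ref{alg:template}, transfers without change, as do the structural arguments that $\ALG$ receives an input to $B$ consistent with $P$ and that a mistake by $\ALG'$ on some Pair Matching item $x_i<\frac12$ corresponds exactly to $\ALG$ choosing for $m_1(x_i)$ the decision that precludes the optimal value on the eventually committed gadget. Since the reduction creates an input to $B$ of size at most $sn$, Theorem~\ref{thm:pair_matching_priority_lb} implies that whenever $\ALG$ reads fewer than $(1-H(\epsilon))n/(2s)$ advice bits, $\ALG'$ makes at least $\epsilon n$ mistakes, so $\ALG$ makes at least $\epsilon n$ wrong decisions on items of the form $m_1(x_i)$.

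Only the final ratio analysis needs to change, since we now wish to lower-bound $\OPT(I)/\ALG(I)\ge 1$. A correct first-item decision attains $\OPT(G^j)$ on both sides of the ratio; an incorrect one attains at most $\BAD(G^j)\le \OPT(G^j)/r$ in the denominator (by the definition of $r$). Since increasing the numerator and denominator of a ratio larger than one by equal quantities decreases it, the worst case for our lower bound places exactly $\epsilon n$ errors and assigns all $(1-\epsilon)n$ correct answers to the gadget $G^2$ with the larger $\OPT$-value. Writing $w\in[0,\epsilon n]$ for the number of errors that commit to $G^1$ and substituting $\BAD(G^j)\le \OPT(G^j)/r$ in the denominator yields
\[
\frac{\OPT(I)}{\ALG(I)} \;\ge\; \frac{(n-w)\OPT(G^2) + w\OPT(G^1)}{(1-\epsilon)n\OPT(G^2) + w\OPT(G^1)/r + (\epsilon n - w)\OPT(G^2)/r}.
\]
Differentiating in $w$ produces no interior critical point, so the minimum is attained at $w\in\{0,\epsilon n\}$; a short comparison shows that $w=\epsilon n$ --- placing every error on the smaller-$\OPT$ gadget --- gives the smaller endpoint, which simplifies to the stated bound $1+\epsilon(r-1)\OPT(G^1)/\bigl(\epsilon\OPT(G^1)+(1-\epsilon)r\OPT(G^2)\bigr)$.

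The main obstacle is keeping the inequality directions straight: for a lower bound on a maximization ratio one must enlarge the denominator (hence the substitution $\BAD(G^j)\le \OPT(G^j)/r$ rather than $\ge$), and one must route correct answers to $G^2$ rather than $G^1$ to make the equal-weight contribution as large as possible. Once these choices are made, the remaining work is routine arithmetic and a single endpoint comparison to verify that $w=\epsilon n$ indeed minimizes the expression.
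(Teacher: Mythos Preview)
Your proposal is correct and follows the paper's own proof essentially step for step: the reduction and structural arguments are carried over from Theorem~\ref{thm:template-minimization} verbatim, the ratio $\OPT(I)/\ALG(I)$ is bounded below via $\BAD(G^j)\le \OPT(G^j)/r$ in the denominator, correct answers are routed to $G^2$, and the minimum over $w\in[0,\epsilon n]$ is taken at $w=\epsilon n$. The only cosmetic difference is that the paper observes the derivative in $w$ is everywhere non-positive (hence the function is monotone), whereas you note there is no interior critical point and then compare the two endpoints; both routes land on $w=\epsilon n$ and the stated bound.
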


\begin{proof}
The proof proceeds as for the minimization case in Theorem~\ref{thm:template-minimization}
until the calculation of the lower bound of $\frac{\ALG(I)}{\OPT(I)}$.
We continue from that point, using the inverse ratio to get values larger than one.

We use that $\BAD(G^j) \leq \OPT(G^j) / r$.
\[\begin{array}{rcl}

\frac{\OPT(I)}{\ALG(I)} & \geq &

\frac{(1-\epsilon)n\OPT(G^2)+w\OPT(G^1)+(\epsilon n - w)\OPT(G^2)}{
      (1-\epsilon)n\OPT(G^2)+w\BAD(G^1)+(\epsilon n - w)\BAD(G^2)}
\\[1.5ex]
& \geq &

\frac{(1-\epsilon)n\OPT(G^2)+w\OPT(G^1)+(\epsilon n - w)\OPT(G^2)}{
      (1-\epsilon)n\OPT(G^2)+\frac{w}{r}\OPT(G^1)+\frac{\epsilon n - w}{r}\OPT(G^2)}

\end{array}\]

Again, taking the derivative with respect to $w$ gives an
always non-positive result. Thus, the smallest value in
the range $[0,\epsilon n]$ for $w$ is found at $w=\epsilon n$.
Inserting this value, we continue the calculations from above:

\[\begin{array}{rcl}

\frac{\OPT(I)}{\ALG(I)} & \geq &
\frac{(1-\epsilon)n\OPT(G^2)+w\OPT(G^1)+(\epsilon n - w)\OPT(G^2)}{
      (1-\epsilon)n\OPT(G^2)+\frac{w}{r}\OPT(G^1)+\frac{\epsilon n - w}{r}\OPT(G^2)}
\\[1.5ex]

& = &
\frac{(1-\epsilon)n\OPT(G^2)+(\epsilon n)\OPT(G^1)}{
      (1-\epsilon)n\OPT(G^2)+\frac{\epsilon n}{r}\OPT(G^1)}
\\[1.5ex]

& = &
\frac{(1-\epsilon)r\OPT(G^2)+\epsilon r\OPT(G^1)}{
      (1-\epsilon)r\OPT(G^2)+\epsilon\OPT(G^1)}
\\[1.5ex]

& = &
1 + \frac{\epsilon (r-1) \OPT(G^1)}{
          (1-\epsilon)r\OPT(G^2)+\epsilon\OPT(G^1)}

\end{array}\]

The latter is the smaller ratio and thus the lower bound we can provide.

\end{proof}

We mostly use
Theorems~\ref{thm:template-minimization} and~\ref{thm:template-maximization}
in the following specialized form.

\begin{corollary}
\label{corollary-form-used}
With the set-up from 
Theorems~\ref{thm:template-minimization} and~\ref{thm:template-maximization},
we have the following:

For a minimization problem, if $\OPT(G^1)=\OPT(G^2)=\BAD(G^1)-1=\BAD(G^2)-1$,
then no fixed priority algorithm reading fewer than $(1-H(\epsilon))n/(2s)$
advice bits can achieve an approximation ratio smaller than
$1+\frac{\epsilon}{\OPT(G^1)}$.

For a maximization problem, if $\OPT(G^1)=\OPT(G^2)=\BAD(G^1)+1=\BAD(G^2)+1$,
then no fixed priority algorithm reading fewer than $(1-H(\epsilon))n/(2s)$
advice bits can achieve an approximation ratio smaller than
$1+\frac{\epsilon}{\OPT(G^1)-\epsilon}$.
\end{corollary}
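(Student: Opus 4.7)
The corollary follows by direct substitution of the hypothesized values of $\OPT(G^1), \OPT(G^2), \BAD(G^1), \BAD(G^2)$ into the lower bound formulas of Theorems~\ref{thm:template-minimization} and~\ref{thm:template-maximization}. So the plan is essentially a simplification exercise in each of the two cases; the advice lower bound $(1-H(\epsilon))n/(2s)$ carries over verbatim from the parent theorems, so no new adversarial argument is needed.

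For the minimization case, I would let $O$ denote the common value $\OPT(G^1)=\OPT(G^2)$, so by hypothesis $\BAD(G^1)=\BAD(G^2)=O+1$. Then
\[r=\min\SET{\tfrac{\BAD(G^1)}{\OPT(G^1)}, \tfrac{\BAD(G^2)}{\OPT(G^2)}}=\tfrac{O+1}{O},\]
so $r-1=1/O$. Substituting into the bound from Theorem~\ref{thm:template-minimization},
\[1+\frac{\epsilon(r-1)\OPT(G^1)}{\epsilon\OPT(G^1)+(1-\epsilon)\OPT(G^2)} = 1+\frac{\epsilon\cdot(1/O)\cdot O}{\epsilon O+(1-\epsilon)O}=1+\frac{\epsilon}{O},\]
which is exactly $1+\epsilon/\OPT(G^1)$ as claimed.

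For the maximization case, again let $O=\OPT(G^1)=\OPT(G^2)$, so $\BAD(G^1)=\BAD(G^2)=O-1$. Then
\[r=\min\SET{\tfrac{\OPT(G^1)}{\BAD(G^1)}, \tfrac{\OPT(G^2)}{\BAD(G^2)}}=\tfrac{O}{O-1},\]
hence $r-1=1/(O-1)$. Substituting into the bound from Theorem~\ref{thm:template-maximization},
\[1+\frac{\epsilon(r-1)\OPT(G^1)}{\epsilon\OPT(G^1)+(1-\epsilon)r\OPT(G^2)} = 1+\frac{\epsilon\cdot(1/(O-1))\cdot O}{\epsilon O+(1-\epsilon)\cdot(O/(O-1))\cdot O}.\]
Multiplying numerator and denominator of the fraction by $(O-1)/O$ gives $\epsilon/(\epsilon(O-1)+(1-\epsilon)O)=\epsilon/(O-\epsilon)$, which is exactly $1+\epsilon/(\OPT(G^1)-\epsilon)$.

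The only mildly tricky aspect is keeping track of the roles of $\OPT(G^1)$ versus $\OPT(G^2)$ (here their equality makes the asymmetric appearance in the formulas collapse to a single quantity) and checking that the simplification of the maximization bound indeed cancels the factor $O$ in the denominator; otherwise this is a routine algebraic verification, with no new combinatorial content.
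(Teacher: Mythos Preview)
Your proposal is correct and matches the paper's (implicit) approach: the paper states the corollary immediately after the two theorems without a separate proof, since it follows by exactly the direct substitution you carry out. Your algebra in both cases is accurate, including the simplification of the maximization bound.
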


Next, we describe a general procedure for constructing gadgets with the above properties. For simplicity, we do it for graph problems in the  vertex arrival, vertex adjacency input model. Later we discuss what is required to carry out such general constructions for other combinatorial problems. In the case of graphs, an input item consists of a vertex name with the names of neighbors of that vertex. First, consider defining a single gadget instead of a pair. We define a gadget in several steps. As the first step, we define a graph $G=\left([n], E \subset {[n] \choose 2}\right)$ over $n$ vertices.
Then, when defining a gadget based on input $x$ to Pair Matching, we pick $n$ vertex names $V_x$ and give a bijection $f\colon V_x \rightarrow [n]$.
Finally, we read off the resulting input items in the order given by the priority function. Thus, we think of $G$ as giving a topological structure of the instance, and it is converted into an actual instance by assigning new names to the vertices. The reason that the names from the topological structure are not used directly is that we want to define a separate gadget instance for each $x \in \mathbb{Q} \cap [0,1/2]$. Thus, all gadgets instances are going to have the same topological structure\footnote{However, both gadgets within a pair do not necessarily have the same topological structure. In Triangle Finding, they did not.}, but will differ in names of vertices. 

For graphs in the vertex arrival, vertex adjacency model, we say that two input items are isomorphic if they have the same number of neighbors, i.e., they differ in just the names of the vertices and the names of their neighbors. A topological structure $G$ consisting only of isomorphic items is a regular graph. For any priority function $P$ and any vertex $v \in [n]$, we can force the corresponding item to appear first according to $P$ by naming vertices appropriately. Fix $x$ and consider all possible input items that can be formed from $V_x$ consistently with $G$.
One of those items appears first according to $P$. Define a bijection $f$ by first mapping that first item to $u$ and its neighbours in $G$, and extending this one-to-one correspondence to other vertices in $G$ in an arbitrary, consistent manner. In this case, the input item corresponding to $u$ would appear first according to $P$ in the input to the graph problem. Because all items are isomorphic, it is always possible to extend the bijection to all of $G$.

Now, suppose that two topological structures $G^1=([n], E^1)$ and $G^2=([m],E^2)$ consist only of isomorphic items. Using a similar idea, for each priority function $P$, each $x \in \mathbb{Q} \cap [0,1/2)$, each $u \in [n]$, and each $v \in [m]$, one can assign names to vertices of $G^1$ and $G^2$ such that the first input item according to $P$ is associated with $u$ in $G^1$ and the same item is associated with $v$ in $G^2$. In particular, this means that as long as the two topological structures are regular, we can always convert them into gadgets satisfying the first item condition.

Suppose that there is a vertex $u$ in $G^1$ that appears in every optimal solution in $G^1$, i.e., a ``reject'' decision leads to non-optimality. Furthermore, suppose that there is a vertex $v$ in $G^2$ that is excluded from every optimal solution in $G^2$, i.e., an ``accept'' decision leads to non-optimality. Then for each $x$, using the above construction, we can make the first item according to $P$ be associated with $u$ in $G^1$ and with $v$ in $G^2$. This means that we can always convert the topological structures into gadgets satisfying the distinguishing decision condition. Finally, observe that the size condition is satisfied with $s = \max(|G^1|, |G^2|)$.

We note a very important special case of the above
construction. Suppose that a single topological structure $G$ that
consists solely of isomorphic input items is such that the optimal
solution is unique and non-trivial, i.e., both ``accept'' and
``reject'' decisions must be represented in the optimal solution. Then
we can duplicate $G$ and pick $u$ to be a vertex which is accepted in
the unique solution and $v$ to be a vertex which is rejected in the
unique solution, and apply the above construction. All in all, this
reduces the problem of defining gadgets to finding a small regular graph
with a unique, non-trivial optimal solution. The size of such a graph
is then equal to the parameter $s$ in
Theorems~\ref{thm:template-minimization}
and~\ref{thm:template-maximization} and
Corollary~\ref{corollary-form-used}. One can relax the condition of a
unique solution and require that the topological gadget has an input
item $u$ with decision ``accept'' in every optimal solution, and an
input item $v$ with decision ``reject'' in every optimal
solution.

This gadget construction can clearly be carried out in other input models. There are very few requirements: we need to have a notion of isomorphism between input items, and a notion of the topological structure of a gadget. Once we have those two notions, if we find a topological structure consisting only of isomorphic items with a unique, non-trivial optimal solution, then we immediately conclude that the problem requires the tradeoff between advice and approximation ratio as outlined in Theorems~\ref{thm:template-minimization} and~\ref{thm:template-maximization} and Corollary~\ref{corollary-form-used} with parameter $s$ equal to the size of the topological template.

We finish this section by remarking that one can perform similar reductions with gadgets
where not all input items are isomorphic. In our last example in Section~\ref{VC},
we present a reduction for Vertex Cover using \emph{two} gadget pairs with some vertices of degree~$2$
and others of degree~$3$. One simply needs that there is one gadget pair for the case where
a vertex of degree~$2$ has the highest priority and another gadget pair for the case where a
vertex of degree~$3$ has highest priority. For both gadget pairs, $s=7$, the optimal value
is $3$, and the minimum possible objective value for the gadget in the pair is $4$. Thus,
the results of Theorem~\ref{thm:template-minimization} (or Theorem~\ref{thm:template-maximization}
if it was a mazimization problem) and Corollary~\ref{corollary-form-used} can be applied.
This idea can be extended to other input models where the gadgets have input items which are
not isomorphic. For simplicity, we do not restate the two theorems or the corollary for
the extension where there are $t$ different classes of isomorphic input items and thus $t$
pairs of gadgets.


\section{Reductions to Classic Optimization Problems}
\label{sec:classical_redux}
In this section, we provide  examples of applications of the general reduction template. 
With the exception of bipartite matching, all of these problems are NP-hard, as a consequence of the NP-completenes of their underlying decision problems, as established in the seminal papers by Cook~\cite{Cook71} and Karp~\cite{Karp72}. Furthermore, these problems are known to have various hardness of approximation bounds. 

\subsection{Independent Set}
\label{sec:is}

First, we consider the maximum independent set problem in the vertex arrival, vertex adjacency input model.
Consider the topological structure of a gadget in Figure~\ref{fig:is-gadget}. There are 5 vertices on the top and 3 vertices on the bottom. All top vertices are connected to all bottom vertices. Additionally, the 5 vertices on the top form a cycle. In this way, each vertex has degree 5 and hence all the input items are isomorphic. If we pick any vertex from the top to be in the independent set, then we forgo all the bottom vertices, and we are essentially restricted to picking an independent set from $C_5$, which has size at most 2. On the other hand, we could pick all 3 vertices from the bottom to form an independent set. 

Suppose without loss of generality that the highest priority input item
is $(1, \{ 4,5,6,7,8\} )$.
The optimal decision for the first vertex is unique: For $G^1$, one should accept,
and for $G^2$, reject.

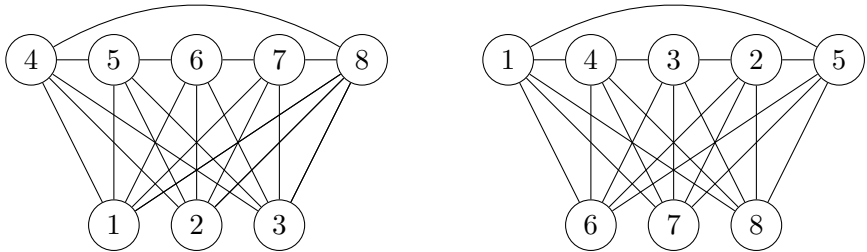
\begin{figure}[h]
\centering

\begin{tikzpicture}[scale=0.55, every node/.style={scale=1.0}]

\node[draw=black,circle,fill=white] (4) at (0,0) {4};
\node[draw=black,circle,fill=white] (5) at (2,0) {5};
\node[draw=black,circle,fill=white] (6) at (4,0) {6};
\node[draw=black,circle,fill=white] (7) at (6,0) {7};
\node[draw=black,circle,fill=white] (8) at (8,0) {8};

\node[draw=black,circle,fill=white] (1) at (2,-4) {1};
\node[draw=black,circle,fill=white] (2) at (4,-4) {2};
\node[draw=black,circle,fill=white] (3) at (6,-4) {3};

\draw (6) -- (1);
\draw (6) -- (2);
\draw (6) -- (3);

\draw (8) -- (1);
\draw (8) -- (2);
\draw (8) -- (3);

\draw (4) -- (1);
\draw (4) -- (2);
\draw (4) -- (3);

\draw (5) -- (1);
\draw (5) -- (2);
\draw (5) -- (3);

\draw (7) -- (1);
\draw (7) -- (2);
\draw (7) -- (3);

\draw (8) -- (1);
\draw (8) -- (2);
\draw (8) -- (3);

\draw (6) -- (5);
\draw (5) -- (4);
\draw (6) -- (7);
\draw (7) -- (8);
\path (8) edge[bend right] (4);

\end{tikzpicture} 
\hspace{1cm}
\begin{tikzpicture}[scale=0.55, every node/.style={scale=1.0}]

\node[draw=black,circle,fill=white] (1) at (0,0) {1};
\node[draw=black,circle,fill=white] (4) at (2,0) {4};
\node[draw=black,circle,fill=white] (3) at (4,0) {3};
\node[draw=black,circle,fill=white] (2) at (6,0) {2};
\node[draw=black,circle,fill=white] (5) at (8,0) {5};

\node[draw=black,circle,fill=white] (6) at (2,-4) {6};
\node[draw=black,circle,fill=white] (7) at (4,-4) {7};
\node[draw=black,circle,fill=white] (8) at (6,-4) {8};

\draw (1) -- (6);
\draw (1) -- (7);
\draw (1) -- (8);

\draw (2) -- (6);
\draw (2) -- (7);
\draw (2) -- (8);

\draw (3) -- (6);
\draw (3) -- (7);
\draw (3) -- (8);

\draw (4) -- (6);
\draw (4) -- (7);
\draw (4) -- (8);

\draw (5) -- (6);
\draw (5) -- (7);
\draw (5) -- (8);

\draw (1) -- (4);
\draw (2) -- (3);
\draw (3) -- (4);
\draw (2) -- (5);
\path (5) edge[bend right] (1);

\end{tikzpicture} 

\caption{Topological structure of the gadgets $(G^1,G^2)$ for independent set.}\label{fig:is-gadget}
\end{figure}

In this case, the maximum number $s$ of input items for a gadget is $8$, $\OPT(G^1)=\OPT(G^2)=3$, and $\BAD(G^1)=\BAD(G^2)=2$. By Corollary~\ref{corollary-form-used}, we can conclude the following:

\begin{theorem}
\label{thm:IS_lb}
For Maximum Independent Set and any $\epsilon\in\EPSINT$,
no fixed priority algorithm reading fewer than $(1-H(\epsilon))n/16$ advice bits can
achieve an approximation ratio smaller than $1+\frac{\epsilon}{3-\epsilon}$.
\end{theorem}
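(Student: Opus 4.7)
The plan is to apply Theorem~\ref{thm:template-maximization}, in the form of Corollary~\ref{corollary-form-used}, to the two topological structures shown in Figure~\ref{fig:is-gadget}. First I would verify that each of $G^1$ and $G^2$ is a $5$-regular graph on $8$ vertices, so that all $8$ input items in either gadget are pairwise isomorphic in the vertex-arrival, vertex-adjacency model (each item has the form of a vertex together with a $5$-element neighbor list). This means that for any priority function $P$ on the universe and any chosen ``target'' vertex in the topological picture, we can rename the vertices of the actual gadget instance so that the first item according to $P$ coincides with the target vertex, exactly as in the general gadget construction described at the end of Section~\ref{sec:general_template}.

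Next I would compute $\OPT$ and $\BAD$ for both gadgets. In each gadget, the three bottom vertices form an independent set of size $3$, so $\OPT(G^1)=\OPT(G^2)=3$. Any independent set of size $3$ in either gadget must be this bottom triple: picking any top vertex excludes all three bottom vertices, and among the top vertices alone only an independent set of $C_5$ remains, which has size at most $2$. Thus the optimal solution is unique, and it uniquely accepts every bottom vertex and uniquely rejects every top vertex. To pin down $\BAD$, note that accepting a top vertex forces us to reject all bottom vertices, and then the remaining solution is an independent set in $C_5$ containing that top vertex, of size at most $2$, so $\BAD(G^1)=\BAD(G^2)=2$. The first-item condition and distinguishing-decision condition are then obtained by choosing, in $G^1$, the target vertex to be a bottom vertex (which must be accepted in the unique optimum), and in $G^2$, the target vertex to be a top vertex (which must be rejected in the unique optimum); the isomorphism argument above lets us synchronize the two gadgets so that they share the same highest-priority input item $m_1(x)$. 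The remaining conditions (size $s=\max(|G^1|,|G^2|)=8$, disjoint copies via fresh vertex names, and $x$-independence of $\OPT,\BAD$) are immediate from the construction.

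Finally, with $\OPT(G^1)=\OPT(G^2)=3=\BAD(G^1)+1=\BAD(G^2)+1$, the maximization case of Corollary~\ref{corollary-form-used} applies directly, giving an advice lower bound of $(1-H(\epsilon))n/(2s) = (1-H(\epsilon))n/16$ to achieve approximation ratio smaller than $1+\frac{\epsilon}{\OPT(G^1)-\epsilon} = 1+\frac{\epsilon}{3-\epsilon}$. The only mildly delicate point I expect is the verification that the optimum in each gadget is genuinely unique and that the ``wrong first decision'' really caps the objective at $2$; beyond that, the theorem is a plug-in application of the template, since the hard work of respecting the priority function $P$ during the reduction has already been absorbed into Theorem~\ref{thm:template-maximization}.
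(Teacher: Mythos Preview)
Your proposal is correct and follows exactly the paper's approach: the same $5$-regular $8$-vertex gadget, the same identification of $\OPT=3$ and $\BAD=2$, and the same plug-in to Corollary~\ref{corollary-form-used} with $s=8$. One small gap to patch: your $\BAD$ argument only treats the $G^2$ case (wrongly \emph{accepting} a top vertex), whereas in $G^1$ the first item is a bottom vertex and the wrong decision is to \emph{reject} it; you should also note that rejecting one bottom vertex leaves only the two remaining bottom vertices as a maximum independent set, so $\BAD(G^1)=2$ as well.
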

Theorem~\ref{thm:IS_lb} is related to but incomparable with the inapproximation bound results on priority algorithms (without advice) of Borodin et al.~\cite{BorodinBLM2010} for weaker models.

\subsection{Bipartite Matching}
\label{app:bipartite_matching}

Given a bipartite graph $G = (U, V, E)$ where $E \subseteq U \times V$, a matching in $G$ is a collection of vertex disjoint edges. For maximum bipartite matching, we must find a matching of maximum cardinality. In this section, we consider the maximum bipartite matching problem in vertex arrival, vertex adjacency model. In this model, an input item consists of a vertex name \emph{necessarily} from $U$ together with names of neighbors \emph{necessarily} in $V$. Thus, the $U$-side can be considered to be ``online'' and the whole graph $G$ is revealed one vertex from $U$ at a time.

Note that our framework was stated to work for decisions over a binary alphabet $\Sigma=\{\text{``accept''},$ $\text{``reject''}\}$. Strictly speaking, in bipartite matching, decisions are stated most naturally over a larger alphabet. For instance, consider an input item $(u, \{v_1, \ldots, v_k\})$, then the decision can be thought of as being made over an alphabet $\Gamma=V \cup \{\bot\}$. Here, a decision $v$ stands for matching $u$ with $v$, and a decision $\bot$ stands for not matching $u$ at all. We can still apply our framework to bipartite matching by surjectively mapping $\Gamma$ onto $\Sigma$ via $f$ as follows: $f(v) = \text{``accept''}$, $f(\bot) = \text{``reject''}$. In effect, we convert a priority algorithm with decisions over $\Gamma$ into a priority algorithm with decisions over $\Sigma$. Since we are interested in lower bounds, the result for $\Sigma$ carries over to $\Gamma$. Of course, this idea is not specific to bipartite matching, and similar alphabet transformations can be done for all problems with decisions over non-binary alphabets. It is reasonable to believe that a framework applicable directly to non-binary alphabets could be used to derive stronger inapproximation results.

Following the reduction template, two input items are isomorphic if the corresponding vertices have the same degree. Thus, a gadget consists of isomorphic items if it is a bipartite graph that is regular on the $U$-side, whereas there are no requirement for the $V$-side. Consider the topological structure of the $3$ by $3$ gadgets in Figure~\ref{fig:bm-gadget}, where
$G^1 = ([3],[3],E^1)$ with $E^1 = \{(1,1),(1,2),(2,2),(2,3),(3,2),(3,3)\}$ and
$G^2 = ([3],[3],E^1)$ with $E^2 = \{(1,1),(1,2),(2,1),(2,3),(3,1),(3,3)\}$. 
All input items are isomorphic -- they are vertices of degree $2$. 
Suppose without loss of generality that the highest priority input item is $(1, \{ 1,2\})$.
The optimal decision for the first vertex is unique: For $G^1$ choose the edge
$(1,1)$, and for $G^2$ choose $(1,2)$.

\begin{figure}[h]
\centering

\begin{tikzpicture}[scale=0.5]

\node[draw=black,circle,fill=white] (l1) at (0,0) {1};
\node[draw=black,circle,fill=white] (l2) at (0,-2) {2};
\node[draw=black,circle,fill=white] (l3) at (0,-4) {3};

\node[draw=black,circle,fill=white] (r1) at (4,0) {1};
\node[draw=black,circle,fill=white] (r2) at (4,-2) {2};
\node[draw=black,circle,fill=white] (r3) at (4,-4) {3};

\draw (l1) -- (r1);
\draw (l1) -- (r2);
\draw (l2) -- (r2);
\draw (l2) -- (r3);
\draw (l3) -- (r2);
\draw (l3) -- (r3);
\end{tikzpicture} 
\hspace{1cm}
\begin{tikzpicture}[scale=0.5]

\node[draw=black,circle,fill=white] (l1) at (0,0) {1};
\node[draw=black,circle,fill=white] (l2) at (0,-2) {2};
\node[draw=black,circle,fill=white] (l3) at (0,-4) {3};

\node[draw=black,circle,fill=white] (r1) at (4,0) {1};
\node[draw=black,circle,fill=white] (r2) at (4,-2) {2};
\node[draw=black,circle,fill=white] (r3) at (4,-4) {3};

\draw (l1) -- (r1);
\draw (l1) -- (r2);
\draw (l2) -- (r1);
\draw (l2) -- (r3);
\draw (l3) -- (r1);
\draw (l3) -- (r3);
\end{tikzpicture} 

\caption{Topological structure of the gadgets $(G^1,G^2)$ for bipartite matching.}\label{fig:bm-gadget}
\end{figure}
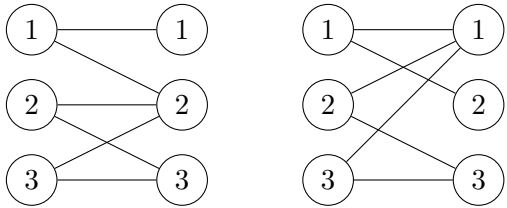

In this case, the (maximum) number $s$ of input items (the number of vertices given) for any of the two gadgets is $3$, $\OPT(G^1)=\OPT(G^2)=3$, and $\BAD(G^1)=\BAD(G^2)=2$. By Corollary~\ref{corollary-form-used}, we can conclude the following:

\begin{theorem}
\label{thm:bipartite_lb}
For Maximum Bipartite Matching and any $\epsilon\in\EPSINT$,
no fixed priority algorithm reading fewer than $(1-H(\epsilon))n/6$ advice bits can
achieve an approximation ratio smaller than $1+\frac{\epsilon}{3-\epsilon}$.
\end{theorem}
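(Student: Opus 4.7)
The plan is to apply Corollary~\ref{corollary-form-used} (maximization form) with the gadget pair $(G^1,G^2)$ depicted in Figure~\ref{fig:bm-gadget}. Here $s=3$, $\OPT(G^1)=\OPT(G^2)=3$, and $\BAD(G^1)=\BAD(G^2)=2$, so once the template's hypotheses have been verified, the advice bound $(1-H(\epsilon))n/(2s)=(1-H(\epsilon))n/6$ and the ratio $1+\frac{\epsilon}{\OPT(G^1)-\epsilon}=1+\frac{\epsilon}{3-\epsilon}$ follow by direct substitution.

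To verify the hypotheses, I would first note that every U-side vertex in either gadget has degree exactly~$2$, so all input items in the vertex-arrival, vertex-adjacency model are isomorphic. For each Pair Matching input $x\le \tfrac12$ I would assign fresh U-side and V-side names to both copies $G^1_x$ and $G^2_x$, which gives the disjoint copies condition; the isomorphism of items then lets me arrange, via the generic construction at the end of Section~\ref{sec:redux_temp}, that the same input item has highest priority under $P$ in $G^1_x$ and in $G^2_x$, which handles the first item and size conditions. For $\OPT$ and $\BAD$ I would argue directly: in $G^1$, V-side vertex $1$ has only $U_1$ as a neighbour, so any matching of size $3$ must contain edge $(U_1,V_1)$, and such a matching does exist by extending with $U_2{-}V_2,U_3{-}V_3$; any other choice for $U_1$ leaves $V_1$ unmatched and forces $U_2,U_3$ to compete for $V_3$, giving value~$2$. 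A symmetric argument for $G^2$ (with $V_2$ playing the role of $V_1$) gives $\OPT(G^2)=3$, $\BAD(G^2)=2$, and shows that the unique optimal choice for $U_1$ in $G^2$ is to match to $V_2$, which differs from the unique optimal choice in $G^1$.

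The real wrinkle, and the step I expect to be the main obstacle, is the distinguishing decision condition, since bipartite matching is not naturally a binary-decision problem: the decision alphabet is $\Gamma=V\cup\SET{\bot}$ rather than $\Sigma=\SET{\text{accept},\text{reject}}$. I would handle this via the alphabet coarsening already discussed at the start of this subsection, so that a priority algorithm over $\Gamma$ is converted into one over $\Sigma$ via $f(v)=\text{accept}$, $f(\bot)=\text{reject}$, and a lower bound for $\Sigma$ transfers back to $\Gamma$. Concretely, the reduction algorithm $\ALG'$ for Pair Matching reads $\ALG$'s $\Gamma$-decision on $m_1(x)$ and decodes it as ``accept'' for $x$ iff it coincides with the unique optimal choice in $G^1_x$; since any other $\Gamma$-decision forces the gadget's contribution down from $\OPT=3$ to at most $\BAD=2$, the template's accounting of mistakes in terms of gadget OPT and BAD values goes through unchanged. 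Plugging $s=3$, $\OPT(G^i)=3$, $\BAD(G^i)=2$ into Corollary~\ref{corollary-form-used} then yields the stated trade-off.
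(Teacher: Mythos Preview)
Your proposal is correct and follows essentially the same route as the paper: the same gadget pair from Figure~\ref{fig:bm-gadget}, the same parameters $s=3$, $\OPT(G^i)=3$, $\BAD(G^i)=2$, and the same appeal to Corollary~\ref{corollary-form-used}. Your verification of the $\OPT$/$\BAD$ values is in fact more explicit than the paper's, which simply asserts them.

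One point worth noting: your treatment of the non-binary decision alphabet is slightly different from the paper's and, arguably, more precise. The paper describes the generic surjection $f(v)=\text{``accept''}$, $f(\bot)=\text{``reject''}$, but for these particular gadgets both optimal first decisions lie in $V$ (match to $V_1$ in $G^1$, to $V_2$ in $G^2$), so that generic $f$ does not by itself yield the distinguishing decision condition. Your gadget-specific coarsening (``accept'' iff $\ALG$'s $\Gamma$-decision equals the unique optimal choice in $G^1_x$) does yield it, and the observation that any other $\Gamma$-decision, including $\bot$, drops the gadget value to $\BAD=2$ is exactly what makes the template's $\OPT$/$\BAD$ accounting go through: every Pair Matching mistake of $\ALG'$ forces a $\BAD$ outcome for $\ALG$, and any extra $\BAD$ outcomes (e.g., $\ALG$ plays $\bot$ when $G^2$ is committed) only worsen $\ALG$'s ratio, so the lower bound is preserved.
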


Theorem~\ref{thm:bipartite_lb} is related to but also incomparable with the
results by Pena and Borodin~\cite{PenaB16},
showing a $\frac{1}{2}$ asymptotic inapproximation bound for priority
algorithms without advice,
and to the results by D{\"{u}}rr et al.~\cite{Durr2016a} on
inapproximation bounds for online algorithms with advice.

\subsection{Maximum Cut}

Consider the unweighted maximum cut problem in the vertex arrival, vertex adjacency input model.
The goal is to partition vertices into two sets (blocks of the partition) such that the number of edges crossing the two sets is maximized. The partition is specified by an algorithm by assigning $0$ or $1$ to vertices. In addition, we require that $0$ is assigned to vertices belonging to the larger block of the partition. The gadget from Section~\ref{sec:is} (see Figure~\ref{fig:is-gadget}) also works for the maximum cut problem. There is a unique non-trivial maximum cut for that gadget: the cut induced by partitioning vertices into $\{1,2,3\}$ and $\{4,5,6,7,8\}$ for $G^1$ and into $\{6,7,8\}$ and
$\{1,2,3,4,5\}$ for $G^2$. 

Suppose without loss of generality that the highest priority input item
is $(1,\{ 4,5,6,7,8\} )$.
The optimal decision for the first vertex is unique:
For $G^1$, respond $1$, and for $G^2$, respond $0$.

In this case, the maximum number $s$ of input items for a gadget is $8$, $\OPT(G^1)=\OPT(G^2)=15$, and $\BAD(G^1)=\BAD(G^2)=14$. By Corollary~\ref{corollary-form-used}, we can conclude the following: 

\begin{theorem}
For Maximum Cut and any $\epsilon\in\EPSINT$,
no fixed priority algorithm reading fewer than $(1-H(\epsilon))n/16$ advice bits can
achieve an approximation ratio smaller than $1+\frac{\epsilon}{15-\epsilon}$.
\end{theorem}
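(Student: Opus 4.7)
The plan is to instantiate the maximization form of Corollary~\ref{corollary-form-used}, reusing the very gadget pair $(G^1_x, G^2_x)$ already introduced in Section~\ref{sec:is}, only reinterpreting the accept/reject decision on a vertex as the $\SET{0,1}$ label assigned in the cut. The underlying graph on $8$ vertices is $5$-regular (each top vertex has two $C_5$-neighbors plus the three bottoms, and each bottom vertex has all five tops), so all input items are isomorphic. Exactly as in the Independent Set reduction, for any priority function $P$ the general construction of Section~\ref{sec:general_template} lets us rename the vertices of each gadget so that the highest-priority item $m_1(x)$ is, in $G^1_x$, a ``bottom'' vertex (optimally labeled $1$, since the three bottoms form the smaller block), and in $G^2_x$ a ``top'' vertex (optimally labeled $0$). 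This establishes the first item condition; the distinguishing decision condition will follow once uniqueness of the optimum at $m_1(x)$ is checked. The remaining hypotheses of Theorem~\ref{thm:template-maximization} (size $s=8$; disjoint copies by allocating fresh vertex names per $x$; independence of $\OPT$, $\BAD$ from $x$) are immediate from the construction template.

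The bulk of the work is evaluating $\OPT$ and $\BAD$ for the Max-Cut objective. Writing $a_T\in\SET{0,\ldots,5}$ and $a_B\in\SET{0,\ldots,3}$ for the number of top/bottom vertices on a fixed side, the $K_{5,3}$ part contributes $3a_T+5a_B-2a_Ta_B$ to the cut, while the $C_5$ part contributes at most $4$ and equals $0$ precisely when $a_T\in\SET{0,5}$. A short enumeration over all $(a_T,a_B)$ will show that the cut is maximized uniquely at $(a_T,a_B)=(0,3)$ (the swap $(5,0)$ being disallowed by the ``$0$ is the larger block'' convention), giving $\OPT(G^1)=\OPT(G^2)=15$. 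For $\BAD(G^1)$, the wrong decision at $m_1(x)$ places the ``bottom'' vertex into the larger block, forcing $a_B\geq 1$ and $|A|\geq|B|$; maximizing over this restricted set will pin the optimal value at $14$, attained by $(a_T,a_B)=(1,3)$ (all three bottoms together with exactly one top vertex on one side), yielding $12$ cut edges in $K_{5,3}$ and $2$ in $C_5$. The case of $\BAD(G^2)$ is symmetric by swapping the roles of the top and bottom sides.

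Given $\OPT(G^1)=\OPT(G^2)=15=\BAD(G^1)+1=\BAD(G^2)+1$ and $s=8$, the specialized maximization form of Corollary~\ref{corollary-form-used} applies verbatim, yielding an inapproximation of $1+\frac{\epsilon}{15-\epsilon}$ against any algorithm reading fewer than $(1-H(\epsilon))n/(2s)=(1-H(\epsilon))n/16$ advice bits, which is the stated bound. The main obstacle I anticipate is the $\BAD=14$ verification: one must check, by a short but careful finite case split that combines the $K_{5,3}$ formula with the three qualitatively different ways the $C_5$ vertices in a fixed block can lie (consecutive vs.\ non-consecutive along the cycle), that no rearrangement of the remaining seven vertices can compensate for the forced misplacement of $m_1(x)$ well enough to recover a cut larger than $14$.
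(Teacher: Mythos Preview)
Your approach is essentially identical to the paper's: reuse the $5$-regular $8$-vertex gadget from the Independent Set section, observe that the unique optimal cut is the top/bottom bipartition with value $15$, set up $m_1(x)$ as a bottom vertex in $G^1$ and a top vertex in $G^2$, and invoke Corollary~\ref{corollary-form-used} with $s=8$, $\OPT=15$, $\BAD=14$. The paper simply asserts these values without the enumeration you sketch, so your write-up is in fact more thorough. One caution: your claim that $\BAD(G^2)$ follows ``by symmetry'' from $\BAD(G^1)$ is not literally valid, since the gadget is not symmetric between tops and bottoms (the $C_5$ sits only on the top side and the part sizes differ); you will need a separate short case check for $\BAD(G^2)$, which does come out to $14$ (attained by putting $m_1(x)$ together with all three bottoms on the $1$-side).
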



\subsection{Maximum Satisfiability}

We consider the MAX-SAT problem (and, in fact, MAX-3-SAT) in the following input model. An input item $(x, S^+,S^-)$ consists of a variable name $x$, a set $S^+$ of \emph{clause information tuples} for those clauses in which $x$ appears positively, and a set $S^-$ of clause information tuples for those clauses where the variable $x$ appears negatively. The clause information tuples for a particular clause contain the name of the clause, the total number of literals in that clause, and the names of the other variables in the clause, but no information regarding whether those other variables are negated or not. This corresponds to Model 2 in~\cite{PenaB16}.
A gadget is then a set of input items defining a consistent CNF-SAT formula.
Thus, for every clause information
tuple $(C,\ell, V)$ for a variable $x$ with
$V=\{ x_{i_1},x_{x_2},\ldots,x_{i_r}\}$, we have that
$\ell=r+1$ (since the variable itself is in the clause along with $r$
other literals), and for each $x_{i_j}$, the variable $x$ occurs in an
information tuple associated with $x_{i_j}$, along with the same clause
name $C$ and the same length $\ell$.
Two input items are isomorphic if the are the same up to
renaming of the variables.
The goal is to satisfy the maximum number of clauses. Consider the following pair of instances (gadgets):

\[ G^1 = C_1 \wedge C_2 \wedge C_3 \wedge C_4 \wedge C_5 \wedge C_6 \wedge C_7 
\wedge C_8, \]
where
$$\begin{array}{ll}
C_1= (x_1 \vee x_2 \vee x_3) & C_2= (x_1 \vee \lnot x_2 \vee \lnot x_3) \\[1ex]
C_3= (x_1 \vee \lnot x_2 \vee x_3) & C_4= (x_1 \vee x_2 \vee \lnot x_3) \\[1ex]
C_5= (\lnot x_1 \vee x_2 \vee x_3) & C_6= (\lnot x_1 \vee x_2 \vee x_3) \\[1ex]
C_7= (\lnot x_1 \vee \lnot x_2 \vee \lnot x_3) & C_8= (\lnot x_1 \vee \lnot x_2 \vee \lnot x_3) 
\end{array}$$

There are only 3 variables, each appearing in every clause.
In addition, each variable occurs
positively in four clauses and negatively in four others.

When restricting the clauses $C_1$ through $C_4$ to just the variables
$x_2$ and $x_3$, the result is all possible clauses over $x_2$ and $x_3$.
Therefore, no 
truth assignment for $x_2$ and $x_3$ can satisfy all four clauses,
unless $x_1$ is set to True.
To satisfy $C_5$ through $C_8$, we can set $x_2$ to True and $x_3$ to False. 
Thus, every maximum assignment has $x_1$ set to True.

Consider
\[ G^2 = C_1 \wedge C_2 \wedge C_3 \wedge C_4 \wedge C_5 \wedge C_6 \wedge C_7 
\wedge C_8, \]
where
$$\begin{array}{ll}
C_1= (\lnot x_1 \vee x_2 \vee x_3) & C_2= (\lnot x_1 \vee \lnot x_2 \vee \lnot x_3) \\[1ex]
C_3= (\lnot x_1 \vee \lnot x_2 \vee x_3) & C_4= (\lnot x_1 \vee x_2 \vee \lnot x_3) \\[1ex]
C_5= (x_1 \vee x_2 \vee x_3) & C_6= (x_1 \vee x_2 \vee x_3) \\[1ex]
C_7= (x_1 \vee \lnot x_2 \vee \lnot x_3) & C_8= (x_1 \vee \lnot x_2 \vee \lnot x_3) 
\end{array}$$

The universe of inputs for these gadgets consists of all input items 
$(x,S^+,S^-)$, where $x\in \{ x_1,x_2,x_3 \}$, and each of $S^+$ and $S^-$ 
contain  four distinct clause information tuples with clause names in the
set $\{ C_1,C_2,C_3,C_4,C_5,C_6,C_7,C_8\}$, lengths equal to $3$, and
variable sets containing the other two variables not equal to $x$. All
eight clause names will appear in every input item.

Suppose without loss of generality that the highest priority input among
all of these possibilities is
\[
(x_1, \{ (C_1,3,\{ x_2,x_3\}),(C_2,3,\{ x_2,x_3\}),(C_3,3,\{ x_2,x_3\}),(C_4,3,\{ x_2,x_3\})\},\]
\[\{ (C_5,3,\{ x_2,x_3\}),(C_6,3,\{ x_2,x_3\}), (C_7,3,\{ x_2,x_3\}),(C_8,3,\{ x_2,x_3\}) \}).
\]
Note that the optimal decision for $x_1$ is unique for each of these gadgets
and is ``True'' for $G^1$ and ``False'' for $G^2$.

In this case, the maximum number $s$ of input items for a gadget is $3$, $\OPT(G^1)=\OPT(G^2)=8$, and $\BAD(G^1)=\BAD(G^2)=7$. By Corollary~\ref{corollary-form-used}, we can conclude the following:

\begin{theorem}
\label{thm:max-sat_lb}
For Maximum $3$-Satisfiability and any $\epsilon\in\EPSINT$,
no fixed priority algorithm reading fewer than $(1-H(\epsilon))n/6$ advice bits can
achieve an approximation ratio smaller than $1+\frac{\epsilon}{8-\epsilon}$.
\end{theorem}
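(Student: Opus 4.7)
The plan is to apply the maximization case of Corollary~\ref{corollary-form-used} to the pair of gadgets $(G^1,G^2)$ written out just before the theorem, instantiating the corollary with parameters $s=3$, $\OPT(G^1)=\OPT(G^2)=8$, and $\BAD(G^1)=\BAD(G^2)=7$. The proof then reduces to checking the five hypotheses of Theorem~\ref{thm:template-maximization} for this pair.

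I would begin with the first item condition and the size condition. In this MAX-SAT input model, two input items $(x,S^+,S^-)$ are isomorphic iff their clause information tuples agree up to renaming of variables and clauses. In each of $G^1$ and $G^2$, every variable appears positively in exactly four length-$3$ clauses and negatively in exactly four length-$3$ clauses, and in every clause the two variables listed alongside the current one are precisely the two other gadget variables. Consequently the three input items within a gadget are pairwise isomorphic, and the universe of possible input items is identical for $G^1$ and $G^2$. By renaming variables and clause labels we may then arrange that both gadgets share the same highest-priority input item $m_1(x)$ exhibited in the excerpt, which establishes the first item condition with $s=|G^1|=|G^2|=3$. The disjoint copies condition is satisfied by using a fresh triple of variable names and eight fresh clause names for the gadget associated with each $x\in\mathbb{Q}\cap[0,\tfrac12]$.

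The combinatorial heart of the argument is the distinguishing decision condition together with the $\OPT$/$\BAD$ calculations. For $G^1$ with $x_1$ set to True, $C_1,\ldots,C_4$ are satisfied outright, while $C_5,C_6$ reduce to $(x_2\vee x_3)$ and $C_7,C_8$ reduce to $(\lnot x_2\vee \lnot x_3)$; both are simultaneously satisfied by any $(x_2,x_3)$ with exactly one True literal, and hence an assignment with $x_1=\text{True}$ satisfies all eight clauses, giving $\OPT(G^1)=8$. If instead $x_1=\text{False}$, then the restrictions of $C_1,\ldots,C_4$ to $\SET{x_2,x_3}$ are the four distinct $2$-clauses on those variables, so any setting of $(x_2,x_3)$ falsifies exactly one of them; together with the four clauses $C_5,\ldots,C_8$, which are all satisfied by $\lnot x_1$, this totals $7$, so $\BAD(G^1)=7$ and the optimal decision for $x_1$ is uniquely True. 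Since $G^2$ is obtained from $G^1$ by flipping the polarity of $x_1$ in every clause, the symmetric argument gives $\OPT(G^2)=8$ attained uniquely with $x_1=\text{False}$ and $\BAD(G^2)=7$. This establishes the distinguishing decision condition and the gadget-$\OPT$/$\BAD$ condition in the special form $\OPT(G^1)=\OPT(G^2)=\BAD(G^1)+1=\BAD(G^2)+1=8$.

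With all five hypotheses verified, the maximization clause of Corollary~\ref{corollary-form-used} immediately yields the advertised advice bound $(1-H(\epsilon))n/(2s)=(1-H(\epsilon))n/6$ and the approximation lower bound $1+\epsilon/(\OPT(G^1)-\epsilon)=1+\epsilon/(8-\epsilon)$. I expect the main technical checkpoint to be pinning down the first item condition rigorously, because item-isomorphism in this model depends on the full clause-tuple data rather than merely on variable degree; once the isomorphism of the three variable-items is secured, the $\OPT$/$\BAD$ computation is a short finite case split and the remaining conditions are bookkeeping.
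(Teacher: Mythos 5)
Your proposal is correct and follows essentially the same route as the paper: it uses the very gadget pair $(G^1,G^2)$ displayed before the theorem, verifies isomorphism of the three variable-items and the unique optimal decision for the first item, computes $\OPT(G^1)=\OPT(G^2)=8$, $\BAD(G^1)=\BAD(G^2)=7$, $s=3$, and applies the maximization case of Corollary~\ref{corollary-form-used}. Your explicit case split for the $\OPT$/$\BAD$ values and the renaming of variables and clause names to enforce the first item and disjoint copies conditions is exactly the argument the paper gives (partly implicitly via its general gadget-construction discussion).
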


Note that the gadget pair used in the proof above has repeated clauses.
We believe it is possible to prove a similar result without repeated clauses at the expense of a more complicated gadget.

Theorem~\ref{thm:max-sat_lb} is related to but incomparable with the Poloczek \cite{Poloczek11} Maximum Satisfiability inapproximation result
 for adaptive priority algorithms (without advice) that, as in our theorem, uses their input Model~2. 

\subsection{A Job Scheduling Problem}

In this section, we consider job scheduling on a single machine of unit time jobs with precedence constraints. In this problem, we are given a set of jobs with precedence constraints specifying, for example, that if job $J_1$ and job $J_2$ are scheduled, then $J_1$ has to precede job $J_2$. The precedence constraints are not necessarily compatible, i.e., there could be a cyclic set of constraints. We are interested in scheduling a maximum number of jobs that are compatible. We can think of the precedence constraints as specifying a directed graph, in which case it is called the maximum induced directed acyclic subgraph problem. This problem is the complement of the minimum feedback vertex set problem -- one of Karp's original NP-complete problems~\cite{Karp72}.
Inapproximation bounds were proven by Lund and Yannakakis in~\cite{LY93}.
The schedule can be obtained from such a subgraph by ordering the jobs topologically and scheduling them one after another in that order. Thus, the input items are of the form $(J, S^+, S^-)$, where $J$ is the name of a job, $S^+$ is the set of jobs such that if they were scheduled together with $J$ they would have to be scheduled before $J$, and $S^-$ is the set of jobs such that if they were scheduled together with $J$ they would have to be scheduled after $J$. Using graph terminology, $S^+$ consists of all incoming neighbors of $J$ and $S^-$ consists of all outgoing neighbors of $J$. An input item describes a subgraph consisting of a distinguished vertex together with all of its predecessors and successors and all edges connecting to or from the distinguished vertex. Two input items are considered isomorphic if they are isomorphic as graphs. This implies in particular that they have the same in- and out-degrees. Figure~\ref{fig:js-gadget} shows a topological gadget such that every optimal solution contains Job~0 and excludes Job~8, and it consists only of isomorphic items (each vertex has in-degree $2$, out-degree $2$, and $4$ different neighbors in all).

\begin{figure}[h]
\centering
\tikzstyle{vertex}=[circle, draw, minimum size=5mm] 
\newcommand{\vertex}{\node[vertex]}
\usetikzlibrary{arrows.meta}
\tikzset{%
  tipA/.tip={Triangle[open,angle=75:3pt]},
  tipB/.tip={Triangle[angle=75:5pt]},
}
\begin{tikzpicture}[scale=1.0] 
\vertex (p0) at (0,2) {0};
\vertex (p1) at (2,2) {1};
\vertex (p2) at (4,2) {2};
\vertex (p3) at (6,2) {3};
\vertex (p4) at (0,0) {4};
\vertex (p5) at (2,0) {5};
\vertex (p6) at (4,0) {6};
\vertex (p7) at (6,0) {7};
\vertex (p8) at (1,1) {8};
\draw[-tipB] (p0) -- (p8);
\draw[-tipB] (p0) to[out=45,in=135,distance=7mm] (p2);
\draw[-tipB] (p1) -- (p0);
\draw[-tipB] (p1) -- (p2);
\draw[-tipB] (p2) -- (p3);
\draw[-tipB] (p2) -- (p7);
\draw[-tipB] (p3) to[out=90,in=90,distance=10mm] (p0);
\draw[-tipB] (p3) to[out=135,in=45,distance=7mm] (p1);
\draw[-tipB] (p4) to[out=-45,in=-135,distance=7mm] (p6);
\draw[-tipB] (p4) -- (p8);
\draw[-tipB] (p5) -- (p4);
\draw[-tipB] (p5) -- (p6);
\draw[-tipB] (p6) -- (p3);
\draw[-tipB] (p6) -- (p7);
\draw[-tipB] (p7) to[out=-90,in=-90,distance=10mm] (p4);
\draw[-tipB] (p7) to[out=-135,in=-45,distance=7mm] (p5);
\draw[-tipB] (p8) -- (p1);
\draw[-tipB] (p8) -- (p5);
\end{tikzpicture}
\caption{Topological structure of a gadget for job scheduling of unit time jobs with precedence constraints.}\label{fig:js-gadget}
\end{figure}
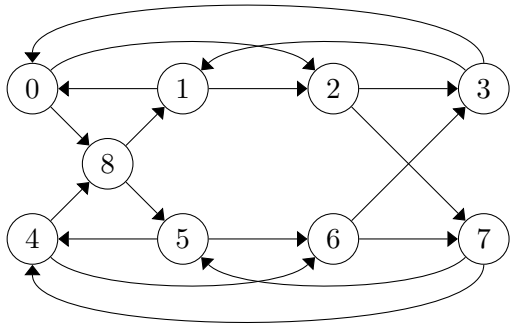

In this case, the maximum number $s$ of input items for a gadget is $9$, $\OPT(G^1)=\OPT(G^2)=6$ (for instance, schedule Jobs $1,0,2,5,4,6$), and $\BAD(G^1)=\BAD(G^2)=5$. By Corollary~\ref{corollary-form-used}, we can conclude the following:

\begin{theorem}
For Job Scheduling of Unit Time Jobs with Precedence Constraints and any $\epsilon\in\EPSINT$,
no fixed priority algorithm reading fewer than $(1-H(\epsilon))n/18$ advice bits can
achieve an approximation ratio smaller than $1+\frac{\epsilon}{6-\epsilon}$.
\end{theorem}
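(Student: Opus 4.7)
The plan is to apply Corollary~\ref{corollary-form-used} for maximization, using the topological structure in Figure~\ref{fig:js-gadget} as the template for a pair of isomorphic gadgets. The first step is to verify that all nine input items of the gadget are mutually isomorphic as subgraphs (a vertex together with its in- and out-neighbours): a direct inspection of the edge list shows that every vertex has in-degree $2$, out-degree $2$, and four distinct neighbours in total. This is exactly the regularity condition needed by the construction at the end of Section~\ref{sec:general_template} in order to convert a single topological template with a unique non-trivial optimum into a gadget pair satisfying the first item condition and the distinguishing decision condition.

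The main obstacle, and where I would put most of the combinatorial effort, is establishing $\OPT(G) = 6$ together with the claim that vertex $0$ lies in every optimum and vertex $8$ lies in none. My plan is a short case analysis exploiting the symmetry between the top and bottom halves: the induced subgraphs on $\{0,1,2,3\}$ and on $\{4,5,6,7\}$ are tournaments containing the $3$-cycles $0\to 2\to 3\to 0$ and $4\to 6\to 7\to 4$, so each half contributes at most three vertices to any acyclic induced subgraph, yielding the upper bound of six jobs (attained by the schedule $1,0,2,5,4,6$). For the necessity of $0$ I would note that $\{1,2,3\}$ contains the cycle $1\to 2\to 3\to 1$, so every acyclic $3$-subset of $\{0,1,2,3\}$ must contain $0$. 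For the exclusion of $8$ I would observe that the edges $0\to 8$ and $8\to 1$ force $0 < 8 < 1$, which conflicts with the orderings $1 < 0$ or $3 < 1 < 0$ imposed by the only acyclic $3$-subsets of $\{0,1,2,3\}$ that contain both $0$ and $1$; the symmetric argument on the bottom half uses $4\to 8$ and $8\to 5$, and the remaining few low-top/low-bottom configurations are easily checked to show that any solution containing $8$ has size at most five.

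With these facts in hand, I would form $G^1$ and $G^2$ as two copies of the topological template and choose the vertex-name bijections (following the recipe in Section~\ref{sec:general_template}) so that the input item of highest priority under $P$ is associated with vertex $0$ in $G^1$ and with vertex $8$ in $G^2$. This secures the first item condition by construction, and the distinguishing decision condition follows immediately from the claim above, since every optimum of $G^1$ must accept this first item while every optimum of $G^2$ must reject it. A nearly identical case analysis shows that wrongly rejecting in $G^1$, or wrongly accepting in $G^2$, caps the attainable solution at five (realized for example by the partial schedule $1,2,5,4,6$), so $\BAD(G^1) = \BAD(G^2) = 5$. Since $\OPT(G^i) = \BAD(G^i) + 1 = 6$ and $s = 9$, and the size, disjoint copies, and additivity conditions are immediate, Corollary~\ref{corollary-form-used} for maximization then yields directly the claimed advice bound $(1 - H(\epsilon))n/18$ and the approximation ratio lower bound $1 + \frac{\epsilon}{6 - \epsilon}$.
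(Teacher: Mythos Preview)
Your approach is essentially the paper's: same topological gadget, same appeal to the regular-template construction of Section~\ref{sec:general_template}, same parameters $s=9$, $\OPT=6$, $\BAD=5$, same final invocation of Corollary~\ref{corollary-form-used}. The paper simply asserts the combinatorial facts; you sketch verifications, which is fine.

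There is one small gap in your write-up worth fixing. When you argue ``each half contributes at most three vertices to any acyclic induced subgraph, yielding the upper bound of six jobs,'' the two halves $\{0,1,2,3\}$ and $\{4,5,6,7\}$ omit vertex~$8$, so this step alone only gives $\OPT\le 7$, not $6$. You do later supply the missing ingredient --- that any acyclic set containing~$8$ has size at most~$5$ (since the only acyclic $3$-subsets of the top half both contain $0$ and $1$, forcing the cycle $0\to 8\to 1\to 0$, and symmetrically on the bottom with $4\to 8\to 5\to 4$) --- but as written the upper bound $\OPT\le 6$ is claimed before it is justified. Reorder the argument: first show solutions containing~$8$ have size at most~$5$, then observe solutions avoiding~$8$ draw from the two tournaments and hence have size at most~$3+3=6$. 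With that reordering everything goes through.
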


\subsection{Vertex Cover}
\label{VC}
Consider the minimum vertex cover problem in the vertex arrival, vertex adjacency input model.

We use the construction from~\cite{BorodinBLM2010} to obtain two pairs of
gadgets, one if the highest priority input item has degree~$2$ and the other if it
has degree~$3$. For each input $x$ to Pair Matching, the universe of input items contains
names of seven vertices, and for each of the vertices all possibilities for both
degrees two and three.

First note that both graphs in Fig.~\ref{Graph3} have vertex
covers of size~$3$.
\begin{figure}[htp]
\centering
\begin{tikzpicture}[scale=0.5]

\node[draw=black,circle,fill=white] (1) at (0,-8) {4};
\node[draw=black,circle,fill=white] (2) at (-3,-6) {3};
\node[draw=black,circle,fill=white] (3) at (0,-6) {7};
\node[draw=black,circle,fill=white] (4) at (3,-6) {5};
\node[draw=black,circle,fill=white] (5) at (-3,-4) {2};
\node[draw=black,circle,fill=white] (6) at (3,-4) {6};
\node[draw=black,circle,fill=white] (7) at (0,-2) {1};

\draw (1) -- (2);
\draw (1) -- (4);
\draw (2) -- (3);
\draw (3) -- (4);
\draw (2) -- (5);
\draw (5) -- (7);
\draw (4) -- (6);
\draw (6) -- (7);
\end{tikzpicture}
\hspace{1cm}
\begin{tikzpicture}[scale=0.5]

\node[draw=black,circle,fill=white] (1) at (0,-2) {4};
\node[draw=black,circle,fill=white] (2) at (0,-5) {3};
\node[draw=black,circle,fill=white] (3) at (3,-2) {7};
\node[draw=black,circle,fill=white] (4) at (3,-8) {5};
\node[draw=black,circle,fill=white] (5) at (0,-8) {2};
\node[draw=black,circle,fill=white] (6) at (3,-5) {6};
\node[draw=black,circle,fill=white] (7) at (-3,-5) {1};

\draw (1) -- (7);
\draw (5) -- (7);
\draw (2) -- (7);
\draw (1) -- (3);
\draw (2) -- (3);
\draw (2) -- (6);
\draw (5) -- (6);
\draw (5) -- (4);
\draw (4) to[out=45,in=45,distance=55mm] (1);
\end{tikzpicture}
\caption{Graph 1 to the left and Graph 2 to the right.}
\label{Graph3}
\end{figure}
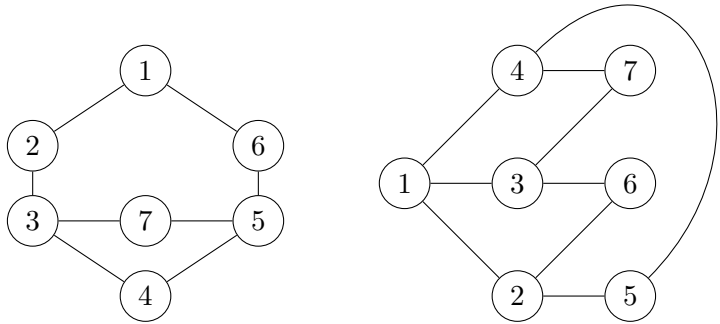

However, in order to obtain a vertex cover of size~$3$, it is necessary
to accept vertex~$1$ in Graph~1 and reject vertex~$2$ in Graph~1. Thus,
the gadget pair for vertices of degree~$2$ consists of two copies of
Graph~1, where the highest priority vertex is vertex~$1$ in the first
gadget and vertex~$2$ in the second. 

Similarly, in order to obtain a vertex cover of size~$3$, it is necessary
to accept vertex~$3$ in Graph~1 and reject vertex~$1$ in Graph~2. Thus,
the gadget pair for vertices of degree~$3$ consists of Graph~1, where the
highest priority vertex is vertex~$3$, and Graph~2, where the highest priority
vertex is vertex~$1$.

The highest priority vertex must have one of these two degrees, so the
reduction can continue with the correct gadget pair for that degree.

For either gadget pair, the maximum number $s$ of input items for a gadget is 
$7$, $\OPT(G^1)=\OPT(G^2)=3$, and $\BAD(G^1)=\BAD(G^2)=4$. 
By  Corollary~\ref{corollary-form-used}, we can conclude the following:

\begin{theorem}
For Minimum Vertex Cover and any $\epsilon\in\EPSINT$,
no fixed priority algorithm reading fewer than $(1-H(\epsilon))n/14$ advice bits can
achieve an approximation ratio smaller than $1+\frac{\epsilon}{3}$.
\end{theorem}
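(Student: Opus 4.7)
The plan is to apply Corollary~\ref{corollary-form-used} in its minimization form to the two gadget pairs described immediately before the theorem, invoking the extension at the end of Section~\ref{sec:general_template} that permits heterogeneous classes of non-isomorphic input items. Since minimum vertex cover has an additive objective, all that remains is to verify, for each of the four gadgets, that $\OPT = 3$, $\BAD = 4$, and $s = 7$, and then to combine the two pairs into a single reduction that selects the appropriate pair based on the degree of the highest-priority input item over the vertex names assigned to each $x$.

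First I would verify the parameters of the degree-$2$ pair, which consists of two copies of Graph~1 with vertices~$1$ and $2$ respectively playing the first-item role. A short case analysis shows that $\{1,3,5\}$ is the unique minimum vertex cover of Graph~1: vertex~$7$ has neighborhood $\{3,5\}$, so any size-$3$ cover must contain $3$ or $5$; checking the alternatives with $7$ in the cover (or only one of $3,5$ in the cover) forces at least four vertices, and once both $3$ and $5$ are included, the edges $\{1,2\}$ and $\{1,6\}$ can only be covered by a single additional vertex via $1$. Thus rejecting vertex~$1$ in the first copy and accepting vertex~$2$ in the second both force the cover to grow by one, giving $\BAD = 4$ in both.

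Next I would verify the degree-$3$ pair, namely Graph~1 with vertex~$3$ first and Graph~2 with vertex~$1$ first. The uniqueness argument above also forces $3$ to be accepted in Graph~1, giving $\BAD = 4$ if rejected. For Graph~2, the set $\{2,3,4\}$ is directly checked to be a vertex cover of size~$3$, and if vertex~$1$ is accepted then the uncovered edges form the induced subgraph on $\{2,3,4,5,6,7\}$, which is the $6$-cycle visiting $2,5,4,7,3,6$ in order; its minimum vertex cover has size $3$, so $\BAD = 1 + 3 = 4$. Hence all four gadgets have $\OPT = 3$, $\BAD = 4$, and $s = 7$.

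Finally, I would assemble these into a reduction in the style of Theorem~\ref{thm:template-minimization}, reserving fresh vertex names $V_x$ for each $x \in \mathbb{Q} \cap [0, 1/2]$ (disjoint copies condition) and naming them so that the $P$-maximum input item over $V_x$ whose degree lies in $\{2,3\}$ appears first in both gadgets of the appropriate pair, playing vertex~$1$ (respectively vertex~$2$) of Graph~1 in the degree-$2$ case, or vertex~$3$ of Graph~1 (respectively vertex~$1$ of Graph~2) in the degree-$3$ case. This realizes the first-item and distinguishing-decision conditions, and Corollary~\ref{corollary-form-used} with $\OPT = 3$ and $\BAD = \OPT+1$ then delivers the advice bound $(1-H(\epsilon))n/14$ and the approximation-ratio bound $1 + \epsilon/3$. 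The main subtlety is confirming that the choice between the two gadget pairs, which depends on whether the priority-maximum input item over $V_x$ has degree~$2$ or degree~$3$, is well-defined and consistent with $P$ before any input is seen; this follows because only finitely many input items are supported on $V_x$ and $P$ is fixed in advance, so the selection is purely a function of $P$ restricted to $V_x$.
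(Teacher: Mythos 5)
Your proposal is correct and follows essentially the same route as the paper: the same two gadget pairs built from Graph~1 and Graph~2 (degree-$2$ pair with vertex~$1$ versus vertex~$2$ of Graph~1 first, degree-$3$ pair with vertex~$3$ of Graph~1 versus vertex~$1$ of Graph~2 first), the same parameters $s=7$, $\OPT=3$, $\BAD=4$, and the same appeal to Corollary~\ref{corollary-form-used} via the two-gadget-pair extension keyed to the degree of the $P$-maximum item. The only difference is that you spell out the verification (uniqueness of the cover $\{1,3,5\}$ in Graph~1 and the $6$-cycle computation in Graph~2) that the paper states without detail.
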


Below we show a weaker result using a regular graph, so all input items are isomorphic.

Consider the topological structure of a gadget in Figure~\ref{fig:vc-gadget}. It is a 4-regular graph on 8 vertices. This graph has a unique, non-trivial minimum vertex cover $\{2,3,4,6,8\}$ (we have verified by enumeration). Note that this is very similar to the case for Independent Set, in that an isomorphic copy of the same graph can be used for the other gadget in the pair. Then, assuming that $(2,\{ 1,3,4,7 \} )$ is the first input item, accepting the vertex can
lead to the unique optimum vertex cover in the gadget depicted, and renaming
the vertex to one different from $\{2,3,4,6,8\}$ and rejecting the vertex
can lead to the unique optimum vertex cover in a second gadget.

\begin{figure}[h]
\centering

\begin{tikzpicture}[scale=0.5]

\node[draw=black,circle,fill=white] (1) at (-4,-4) {1};
\node[draw=black,circle,fill=white] (2) at (-2,-2) {2};
\node[draw=black,circle,fill=white] (3) at (0,0) {3};
\node[draw=black,circle,fill=white] (4) at (2,-2) {4};
\node[draw=black,circle,fill=white] (5) at (4,-4) {5};

\node[draw=black,circle,fill=white] (6) at (2,-6) {6};
\node[draw=black,circle,fill=white] (7) at (0,-8) {7};
\node[draw=black,circle,fill=white] (8) at (-2,-6) {8};

\draw (1) -- (2);
\draw (2) -- (3);
\draw (3) -- (4);
\draw (4) -- (5);
\draw (5) -- (6);
\draw (6) -- (7);
\draw (7) -- (8);
\draw (8) -- (1);

\draw (2) -- (4);
\draw (4) -- (7);
\draw (7) -- (2);

\draw (1) edge[bend left] (3);
\draw (3) edge[bend left] (5);
\draw (5) -- (8);
\draw (8) -- (6);
\draw (6) -- (1);

\end{tikzpicture} 

\caption{Topological structure of a gadget for vertex cover.}\label{fig:vc-gadget}
\end{figure}
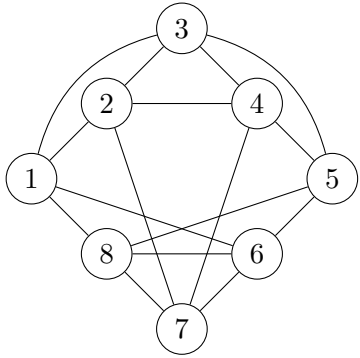

In this case, the maximum number $s$ of input items for a gadget is $8$, $\OPT(G^1)=\OPT(G^2)=5$, and $\BAD(G^1)=\BAD(G^2)=6$. By  Corollary~\ref{corollary-form-used}, we can conclude the following: 

\begin{theorem}
For Minimum Vertex Cover and any $\epsilon\in\EPSINT$,
no fixed priority algorithm reading fewer than $(1-H(\epsilon))n/16$ advice bits can
achieve an approximation ratio smaller than $1+\frac{\epsilon}{5}$.
\end{theorem}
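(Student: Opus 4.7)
The plan is to apply the minimization form of Corollary~\ref{corollary-form-used} with the topological gadget $G$ from Figure~\ref{fig:vc-gadget}, following the regular-graph recipe at the end of Section~\ref{sec:general_template}. Since $G$ is $4$-regular on $8$ vertices, every input item in the vertex arrival, vertex adjacency model has the same shape -- a vertex together with a list of four neighbor names -- so all items are mutually isomorphic. Consequently, for any priority function $P$ used by the target algorithm $\ALG$ and any distinguished vertex $v \in V(G)$, one can label a fresh eight-element name set $V_x$ so that the highest-priority item over $V_x$ under $P$ corresponds to $v$ in $G$; this is the same labeling move that powered the independent set and max-cut reductions earlier in this section.

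The crux is verifying that $G$ has a unique minimum vertex cover $C^\star=\{2,3,4,6,8\}$ of size $5$. Existence is a direct inspection of the $16$ edges. Uniqueness and minimality reduce, by complementation, to the statement that $\alpha(G)=3$ with $\{1,5,7\}$ the unique maximum independent set. This is a short case analysis on non-neighbors: for each $w\in C^\star$, the non-neighbors of $w$ induce a triangle in $G$ (e.g.\ $\{5,6,8\}$ for $w=2$), so no independent set of size $\ge 3$ can contain $w$; for each $w\in\{1,5,7\}$, the non-neighbor set of $w$ is a $3$-vertex path with a unique non-edge, which forces the other two independent-set vertices to be precisely the remaining members of $\{1,5,7\}$; and no vertex is simultaneously non-adjacent to $1$, $5$, and $7$, so the independent set cannot be extended to size $4$. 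With uniqueness in hand, $(G_x^1, G_x^2)$ is built from two copies of $G$ over fresh, pairwise disjoint name sets $V_x$, labelled so that the highest-priority item maps to vertex $2\in C^\star$ in $G_x^1$ (whose unique optimum accepts the first item) and to vertex $1\notin C^\star$ in $G_x^2$ (whose unique optimum rejects it). This supplies both the first-item condition and the distinguishing-decision condition.

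Reading off the remaining parameters is routine: $s=8$, the disjoint copies condition holds by construction, $\OPT(G^1)=\OPT(G^2)=|C^\star|=5$, and $\BAD(G^1)=\BAD(G^2)=6$. For the latter, in $G_x^1$ rejecting the image of vertex $2$ forces every neighbor $\{1,3,4,7\}$ of $2$ into the cover, and the non-neighbors $\{5,6,8\}$ of $2$ induce a triangle that needs two additional vertices, for a total of $6$; in $G_x^2$, accepting the image of vertex $1$ together with a size-$4$ completion would yield a size-$5$ vertex cover of $G$ containing $1$, contradicting uniqueness of $C^\star$, so the minimum is again $6$. Plugging $s=8$ and $\BAD(G^j)=\OPT(G^j)+1$ into Corollary~\ref{corollary-form-used} then yields the advice bound $(1-H(\epsilon))n/(2s)=(1-H(\epsilon))n/16$ and the inapproximation ratio $1+\epsilon/\OPT(G^1)=1+\epsilon/5$. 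The main obstacle is the uniqueness of the MVC, since that single combinatorial fact drives both the distinguishing-decision condition and the unit gap between $\OPT$ and $\BAD$; everything else is template bookkeeping inherited from the preceding subsections.
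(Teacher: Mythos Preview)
Your proposal is correct and follows essentially the same approach as the paper: both use the $4$-regular $8$-vertex gadget of Figure~\ref{fig:vc-gadget}, establish that $\{2,3,4,6,8\}$ is the unique minimum vertex cover, map the highest-priority item to a cover vertex in $G^1$ and a non-cover vertex in $G^2$, and invoke Corollary~\ref{corollary-form-used} with $s=8$, $\OPT=5$, $\BAD=6$. The only difference is that the paper dispatches uniqueness with ``verified by enumeration'' whereas you supply a clean structural argument via the complementary independent set; note that your $\BAD(G^2)\le 6$ is implicit (just take $C^\star\cup\{1\}$), which you may want to state explicitly.
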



\section{Concluding Remarks}
\label{sec:conclusion}
We have developed a general framework for showing linear lower bounds on the number of advice bits required to get a constant approximation ratio for fixed priority algorithms with advice. The framework relies on reductions from the Pair Matching problem --- an analogue of the Binary String Guessing problem from the online world, resistant to universe orderings. Many problems remain open:
\begin{itemize}

\item Can our framework (or a modification of it) show non-constant inapproximation results with large advice, for example, for independent set?


\item In vertex coloring, any decision for the first item can be completed to an optimal solution. Can our framework be modified to handle such problems? For example, see an argument for the makespan problem in \cite{Regev02}.

\item An interesting goal is to study the ``structural complexity'' of online and priority algorithms. Can one define analogues of classes such as NP, NP-Complete, $\sharp$P, etc. for online/priority problems? If so, are complete problems for these classes natural? 

\end{itemize}

\bigskip
{\bf Acknowledgements.} Part of the work was done when the second and third authors were visiting the University of Toronto, the first author was visiting Toyota Technological Institute at Chicago, and the fourth author was a postdoc at the University of Toronto.


\bibliography{refs} 
\bibliographystyle{plain}





\end{document}